\documentclass[useAMS,referee,usenatbib]{biom}

\usepackage[figuresright]{rotating}
\usepackage{url}
\usepackage{amsmath}
\usepackage{dsfont}
\usepackage{amssymb}
\usepackage{hhline}
\usepackage{multirow}
\usepackage{xcolor}
\usepackage{subfig}
\usepackage{adjustbox}
\usepackage{caption}
\usepackage{changepage}
\usepackage{mathrsfs}
\usepackage{natbib}
\usepackage{enumitem}
\usepackage{hyperref}

\newcounter{as}
\newcommand{\assum}[1]{%
  \refstepcounter{as}\label{#1}%
  \par\medskip\noindent\hangindent=2em\hangafter=1
  (\theas)\hspace{0.5em}\ignorespaces}

\DeclareMathOperator*{\ind}{1{\hskip -2.5 pt}\hbox{I}}

\newcommand{\R}{\mathds{R}}

\newcommand{\bt}{\bm{t}}

\newcommand{\bw}{\bm{w}}

\newcommand{\bz}{\bm{z}}

\newcommand{\bD}{\bm{D}}

\newcommand{\bU}{\bm{U}}
\newcommand{\bV}{\bm{V}}
\newcommand{\bW}{\bm{W}}

\newcommand{\bZ}{\bm{Z}}

\newcommand{\balpha}{\bm{\alpha}}

\newcommand{\bkappa}{\bm{\kappa}}

\newcommand{\bPsi}{\bm{\Psi}}

\newcommand{\bbeta}{\mbox{\boldmath $\beta$}}
\renewcommand{\hat}{\widehat}

\newcommand{\btheta}{\bm{\theta}}
\newcommand{\E}{\mathds{E}}
\renewcommand{\P}{\mathds{P}}
\def\T{{\sf T}}

\title[U-Shaped Risk]{Rank-Based Estimation of U-Shaped Biomarker Risk Curves and Critical Points for Time-to-Event Outcomes}

\author{Zhirui Fu$^{1}$, 
Mei-Cheng Wang$^{2}$,
Yu Du$^{3}$, and
Yuxin Zhu$^{4, 2,*}$\email{daisy@jhu.edu}\\
$^{1}$Department of Biostatistics and Bioinformatics, Emory University Rollins School of Public Health\\
$^{2}$Department of Biostatistics, Johns Hopkins Bloomberg School of Public Health\\
$^{3}$Eli Lilly and Company\\
$^{4}$Department of Neurology, Johns Hopkins School of Medicine}

\makeatletter
\def\@journal{}
\def\ps@titlepage{\let\@mkboth\@gobbletwo
 \def\@oddhead{}%
 \def\@oddfoot{}%
 \def\@evenhead{}%
 \def\@evenfoot{}%
 \def\sectionmark##1{}%
 \def\subsectionmark##1{}}
\def\ps@headings{\let\@mkboth\markboth
 \def\@oddhead{\Large\hfill{\small\it\@shorttitle}\hfill\hspace{1.5em}%
   \rm\@ddell\hbox to 0pt{\small\thepage}}%
 \def\@oddfoot{}%
 \def\@evenhead{\Large\@ddell\hbox to 0pt{\small\thepage}\hspace{1.5em}\hfill}%
 \def\@evenfoot{}%
 \def\sectionmark##1{\markboth{##1}{}}%
 \def\subsectionmark##1{\markright{##1}}}
\ps@headings
\makeatother

\begin{document}

\label{firstpage}

\begin{abstract}
U-shaped relationships between prognostic biomarker levels and adverse event risk are commonly observed across diseases, where both low and high biomarker values are associated with elevated risk, with a well-defined minimum---the critical point---marking the biomarker value of the lowest risk. The U-shaped risk curve, especially the location of the critical point, informs the identification of high- and low-risk subgroups. However, existing methods are limited: U-shaped risk models rarely accommodate survival outcomes, and existing survival analysis methods do not enable estimation of or formal inference for the critical point. To fill this gap, we propose a semiparametric transformation model that explicitly parameterizes the critical point, a rank-based maximum C-index estimator for the parametric component, and a smoothed Kaplan-Meier estimation approach for the nonparametric component. The resulting framework estimates both subgroup-specific U-shaped risk curves and their critical points within a single survival model. We establish consistency and asymptotic normality of the proposed estimators and demonstrate their finite-sample performance through numerical studies. We apply the proposed methods to UK Biobank data to characterize subgroup-specific U-shaped associations between body mass index and all-cause mortality and to identify the corresponding critical points.
\end{abstract}

\begin{keywords}
Biomarkers; change point; rank-based estimation; semiparametric model; survival outcomes; U-shaped risk.
\end{keywords}

\maketitle

\section{Introduction}
\label{sec:introduction}

U-shaped risk structures are prevalent across biomedical domains. As prognostic biomarker values increase, the risk of an adverse event decreases to the lowest point (the critical point) and then increases, constituting a U-shaped curve. For example, high or low levels of high-density lipoprotein cholesterol are associated with an increased risk of stroke \citep{li2022u}; high or low levels of low-density lipoprotein cholesterol are associated with an increased all-cause mortality risk during the acute stage of ischemic stroke \citep{chen2023u}; high or low body mass index (BMI) levels are associated with an increased all-cause mortality risk \citep{bhaskaran2018association}.  These U-shaped risk curves, together with critical point and critical region estimates, identify optimal biomarker levels and enable classification of patients into risk subgroups, guiding preventive care and treatment design.
The estimation of critical points has been discussed in the broader change-point literature \citep{khodadadi2008change}, where the primary goal is to detect locations at which the underlying regression structure changes. The problem considered in this paper, however, is distinct: in U-shaped risk settings, the critical point corresponds to a biomarker-defined risk extremum rather than a structural break in the regression function. This distinction shifts the methodological focus from detecting a change in regression structure to estimating and quantifying uncertainty for the biomarker value at minimum risk, a task that remains unresolved in both non-survival and survival settings.

In non-survival settings, existing methods accommodate U-shaped risk patterns to varying degrees, but approaches to critical-point estimation and inference remain restrictive. Parametric approaches, such as piecewise linear and polynomial models, impose strong functional assumptions and either fix the critical point a priori or leave it unparameterized, precluding direct inference \citep{muggeo2003estimating}. Kernel-based nonparametric estimators \citep{muller1992change} avoid parametric forms but are sensitive to bandwidth selection, computationally intensive, and lack valid inference. Profile likelihood approaches \citep{hall2000change} improve computational efficiency but lack established asymptotic theory. Bayesian models \citep{gossl2001bayesian} instead enable model-based inference but require costly MCMC computation and prior specification. These methods do not accommodates time-to-event outcomes with censoring. 

Meanwhile in survival settings, methodological challenges are further compounded by censoring. Existing survival methods primarily estimate time- or hazard-based critical points \citep{muller1990nonparametric, brazzale2019nonparametric, zhao2009change} rather than biomarker-based ones. Spline-based Cox models require the critical point to be specified a priori as part of the spline knots, so it cannot be estimated from the data, and uncertainty quantification for tuned-knot locations remains challenging. The most closely related work \citep{tang2017biomarker} estimates a biomarker-based critical point, but it addresses a longitudinal-trajectory setting rather than censored survival outcomes, and the asymptotic properties of its estimator remain unstudied. Taken together, these limitations across both non-survival and survival settings reveal a fundamental methodological gap: direct estimation and valid inference remain unavailable for the biomarker-based critical point.

Specifically, no existing method for survival outcomes provides all three of the following capabilities simultaneously: (i) joint estimation of the critical point together with the model parameters, rather than post-hoc numerical extraction from a fitted curve; (ii) formal inference, such as confidence intervals, for the critical point location; and (iii) covariate-dependent critical points that allow different patient subgroups to have distinct biomarker thresholds for minimum risk, estimated within a single model.

These limitations point to the need for a method that characterizes U-shaped biomarker risk curves under censoring, estimates critical points with valid inference, and remains computationally efficient, robust, and flexible to accommodate complex risk patterns, including multiple turning points along the risk curve. In this work, we propose a semiparametric model for the U-shaped curve of event risk as a function of a biomarker while accounting for possible heterogeneity due to additional risk factors. Our model adopts an unspecified transformation and parameterizes the critical point on the transformed risk scale through a non-traditional piecewise linear function that naturally allows the critical point to be estimated.

The remainder of this paper is organized as follows. Section \ref{sec:model} describes our model and its properties. In Section \ref{sec:estimation-finite}, we present a rank-based estimation approach and establish the asymptotic properties of the proposed estimator. We introduce a smoothed Kaplan-Meier estimator for our model's nonparametric transformation component in Section \ref{sec:estimation-infinite}. Section \ref{sec:estimation-critical-point} presents the estimation and inference for the critical point (i.e., the lowest point of the U-shaped curve) and the critical region (i.e., the range of biomarker values corresponding to low risks). Simulation studies in Section \ref{sec:simulation} evaluate our methods' finite-sample performance. In Section \ref{sec:data-analyses}, we apply our methods to data from the UK Biobank, a large prospective cohort study with long-term follow-up, to demonstrate their performance and practical relevance in real-world survival analysis.

\section{Semiparametric Model with a Flexible Critical Point}\label{sec:model}

In this section, we propose a semiparametric model for the time-to-event outcome $T$ as a function of a biomarker $X$, designed to capture U-shaped risk patterns. In addition to the effect of the biomarker on event risk, we allow covariates $\bZ \in\R^p$ to modify the U-shaped relationship, so that the critical point varies across covariate-defined subgroups. 
We assume prior knowledge that a U-shaped relationship exists between $X$ and event risk---that is, both low and high biomarker levels correspond to an elevated risk of event occurrence---at an arbitrary time point $t$. 
We leverage this structure by considering the following semiparametric model that explicitly parameterizes the critical point:
\begin{align}
T = G\big\{-\max(-X+\bZ^{\T}\balpha_1, \beta_0 + \beta_1 X + \bZ^{\T}\balpha_2) + \epsilon\big\},\label{semi_model}
\end{align}
where $\beta_0 \in \R, \beta_1 >0$ are parameters depicting the location and shape of the U-shaped risk curve, $\balpha_1, \balpha_2\in\R^p$ are covariate coefficients that allow the location and shape of the U-shaped risk curve to vary for different covariate values, $\epsilon$ is a continuous error term independent of $(X,\bZ)$ following some otherwise unspecified distribution, and $G: \R \rightarrow{\R}$ is a continuous and strictly monotonically increasing function but otherwise unspecified. Note that we normalize the coefficient of $X$ in the first component of the piecewise linear structure to 1 without loss of generality: a more general specification replacing $-X$ by $-(\beta_0^* + \beta_1^* X)$ in the first component yields an equivalent model class through rescaling, since both $G$ and the distribution of $\epsilon$ are unspecified \citep[cf.][]{han1987non}.

Model (\ref{semi_model}) captures the U-shaped relationship while explicitly parameterizing the critical point. The two linear components inside the maximum intersect at $X_c=(\beta_1 +1)^{-1}\big\{ \bZ^{\T}(\balpha_1-\balpha_2) - \beta_0 \big\}$, which defines the critical point. At this value, the expected event-free time (expectation of $T$) reaches its maximum, or equivalently, the risk of adverse events reaches its minimum.
Model (\ref{semi_model}) is equivalently written as:
\begin{align*}
    T = \ind(X<X_c)\cdot G(X - \bZ^{\T}\balpha_1 + \epsilon) + \ind(X\geq X_c) \cdot G(-\beta_0 - \beta_1X-\bZ^{\T}\balpha_2 + \epsilon).
\end{align*}
Under the constraint of $\beta_1>0$, the piecewise linear structure inside $G(\cdot)$ implies that $T$ first stochastically increases as $X$ increases on the left of the critical point, and then stochastically decreases as $X$ increases on the right of the critical point. Equivalently, as biomarker $X$ increases, the risk of adverse events first decreases and then increases, forming a U-shaped curve. 
This U-shaped risk pattern is flexible in shape due to the inclusion of an unspecified transformation function $G$ and an unspecified error term $\epsilon$, which together allow for smooth distortion of the curve. Specifically, $G$ accommodates non-linear scaling of the risk pattern across the range of $X$, enabling the U-shaped trend to take more general smooth forms, while preserving the inherent monotonicity on each side of the critical point. Notably, $\epsilon$ and $G$ are not individually identifiable. Instead, these two nonparametric components are estimated concurrently through the estimation of the survival probability as a function of the biomarker and covariates. 
Despite this nonparametric flexibility, the underlying piecewise linear structure ensures that the critical point $X_c$ remains explicitly expressed as a function of finite-dimensional parameters $(\beta_0,\beta_1,\balpha_1,\balpha_2)$. This model structure enables statistical inference for these parameters, and the explicit parametric form of $X_c$ extends the inference to the critical point, as well as related quantities such as the critical region.

Model (\ref{semi_model}) extends the class of semiparametric transformation models introduced by \cite{han1987non}, in which $T = G(X^{\T}\bbeta + \epsilon)$ with $G$ and the distribution of $\epsilon$ both unspecified. Our model replaces the linear predictor $X^{\T}\bbeta$ with the piecewise linear term $-\max(-X+\bZ^{\T}\balpha_1, \beta_0+\beta_1X+\bZ^{\T}\balpha_2)$, thereby accommodating non-monotone biomarker--outcome relationships while retaining the rank-based estimation framework that \cite{han1987non} established for monotone index models.

Several widely adopted survival and related regression models are special cases of our proposed semiparametric model (\ref{semi_model}) under appropriate choices of the transformation function $G(\cdot)$ and the error distribution $\epsilon$.

\begin{example}
    (Proportional Hazards Model)
    Taking $G(t) = \Lambda_0^{-1}(\exp(t))$ and $\epsilon$ with distribution function $F(u) =1-\exp\{-\exp\{u\}\}$, model (\ref{semi_model}) implies
    $$\lambda(t)=\lambda_0(t)\exp\{\max(-X+\bZ^{\T}\balpha_1, \beta_0 + \beta_1 X + \bZ^{\T}\balpha_2)\},$$ which corresponds to a Cox proportional hazards model with a piecewise linear index structure. Equivalently, $-\log\{-\log S(t|X,\bZ)\} = -G^{-1}(t) - \max(-X+\bZ^{\T}\balpha_1, \beta_0 + \beta_1 X + \bZ^{\T}\balpha_2)$, exhibiting an additive structure on the complementary log-log scale.
\end{example}

\begin{example}
    (Accelerated Failure Time Model)
    Taking $G(t) = \exp(t)$ and leaving the error distribution unspecified, model (\ref{semi_model}) implies
    $$\log T = -\max(-X+\bZ^{\T}\balpha_1, \beta_0 + \beta_1 X + \bZ^{\T}\balpha_2) + \epsilon,$$
    which corresponds to an accelerated failure time model with a piecewise linear index structure on the log-time scale.
\end{example}

\begin{example}
    (Proportional Odds Model)
    Taking $\epsilon$ with logistic distribution $F(u)=\{1+\exp\{-u\}\}^{-1}$ and leaving $G$ unspecified, model (\ref{semi_model}) implies
    $$\log\frac{1-S(t|X,\bZ)}{S(t|X,\bZ)}=G^{-1}(t)+\max(-X+\bZ^{\T}\balpha_1, \beta_0 + \beta_1 X + \bZ^{\T}\balpha_2),$$
    which corresponds to a proportional odds model, where the log-odds of failure by time $t$ exhibits an additive structure between a baseline time component $G^{-1}(t)$ and the covariate dependent risk index $\max(-X+\bZ^{\T}\balpha_1, \beta_0 + \beta_1 X + \bZ^{\T}\balpha_2)$.
\end{example}

\begin{example}
    (Binary Choice Model)
    Taking $G(t)=\ind\{t\geq 0\}$ as a limiting case, model (\ref{semi_model}) reduces to the latent index representation
    $$T = \ind\{-\max(-X+\bZ^{\T}\balpha_1, \beta_0+\beta_1X+\bZ^{\T}\balpha_2)+\epsilon\geq 0\}.$$
    If $\epsilon \sim N(0,1)$, the model corresponds to a probit specification with
    \[
    P(T=1|X,\bZ)=1-\Phi\big\{\max(-X+\bZ^{\T}\balpha_1, \beta_0 + \beta_1 X + \bZ^{\T}\balpha_2)\big\},
    \]
    whereas if $\epsilon$ follows a logistic distribution $F(u)=\{1+\exp\{-u\}\}^{-1}$, the model reduces to a logit model with
    \[
    P(T=1|X,\bZ)=\big\{1+\exp\big\{\max(-X+\bZ^{\T}\balpha_1, \beta_0 + \beta_1 X + \bZ^{\T}\balpha_2)\big\}\big\}^{-1}.
    \]
\end{example}

Existing approaches such as proportional hazards models with piecewise polynomial splines accommodate U-shaped relationships, but they do not parametrize the critical point. Its location must either be pre-specified as a knot or extracted post-hoc by numerically optimizing the fitted curve, and in neither case does a valid inferential framework exist for the critical point itself. Consequently, these approaches cannot provide confidence intervals or hypothesis tests for the biomarker value at minimum risk. Our model addresses this limitation by expressing the critical point as $X_c=(\beta_1 +1)^{-1}\{ \bZ^{\T}(\balpha_1-\balpha_2) - \beta_0 \}$, a closed-form function of the finite-dimensional model parameters, enabling point estimation and uncertainty quantification while imposing minimal assumptions on the transformation $G$ and error term $\epsilon$.

\section{A Maximum C-index Estimator: Rank-Based Maximization for the Parametric Component}\label{sec:estimation-finite}

In this section, we propose a rank-based maximization estimator for the finite-dimensional parametric component in model (\ref{semi_model}) by comparing the relative ordering of survival outcomes across subjects. Let $(Y_i, \Delta_i, X_i, \bZ_i)$, $i=1,\ldots,n$, denote an i.i.d. sample where $Y_i=\min(T_i,C_i)$ is the observed time, $\Delta_i=\ind(T_i<C_i)$ is the event indicator, $X_i$ is the biomarker of interest, and $\bZ_i$ represents the collection of risk factors that potentially modify the specific patterns of U-shaped risk curves. We assume conditionally independent right censoring, i.e., $C_i \perp T_i | (X_i,\bZ_i)$, so that the relative ordering information used by our objective remains valid after censoring. This assumption is sufficient but not necessary for the consistency of the proposed estimator; weaker conditions for rank-based estimators are discussed in \cite{wang2019maximum}.

We estimate the finite-dimensional parameter vector without specifying $G$ or the distribution of $\epsilon$ by exploiting the stochastic ordering implied by model ($\ref{semi_model}$). 
For simplicity, let $\btheta=(\beta_0,\beta_1,\balpha_1^{\T},\balpha_2^{\T})^{\T}\in\Theta$, and define the piecewise linear risk index $H(X,\bZ;\btheta)=\max(-X+\bZ^{\T}\balpha_1,\beta_0+\beta_1X+\bZ^{\T}\balpha_2)$. Consider any two independent observations $(Y_i, \Delta_i, X_i, \bZ_i)$ and $(Y_j, \Delta_j, X_j, \bZ_j)$.
Because $G$ is strictly increasing and $\epsilon$ is independent of $(X,\bZ)$, a larger value of $H(X,\bZ;\btheta)$ corresponds to stochastically smaller survival time: whenever $H(X_i,\bZ_i;\btheta)>H(X_j,\bZ_j;\btheta)$, we have $-H(X_i,\bZ_i;\btheta)+\epsilon_i$ stochastically smaller than $-H(X_j,\bZ_j;\btheta)+\epsilon_j$, which implies that $T_i$ is stochastically smaller than $T_j$, and hence
$P(T_i<T_j)>P(T_j<T_i)$
\citep{han1987non}.
This ordering carries through the presence of right censoring: whenever $H(X_i,\bZ_i;\btheta)>H(X_j,\bZ_j;\btheta)$, we have 
$P(Y_i<Y_j,\Delta_i=1)>P(Y_j<Y_i,\Delta_j=1)$ (proof given in Web Appendix A). Because this ordering is determined by the ranks of the risk index $H(X,\bZ;\btheta)$, $\btheta$ is identified without specifying either $G$ or the distribution of $\epsilon$.

The stochastic ordering motivates estimating $\btheta$ by maximizing the agreement between the risk index and the observed survival outcomes. A pair $(i, j)$ is comparable if the subject with the smaller observed time has $\Delta=1$. Among comparable pairs, we call a pair concordant if the subject with the smaller observed time has the larger value of $H(X,\bZ; \btheta)$.
Counting concordant pairs over all ordered pairs yields the empirical C-index $C_n(\btheta)$:
\begin{align}
\begin{split}
    C_n(\btheta) = & n^{-2}\sum_{i=1}^n\sum_{j=1}^n
    \Big[\ind(Y_i<Y_j, \Delta_i=1)\cdot 
    \ind\big\{H(X_i,\bZ_i;\btheta)
    >H(X_j, \bZ_j; \btheta)\big\}\\
    &+\ind(Y_i>Y_j, \Delta_j=1)\cdot \ind\big\{H(X_i, \bZ_i;\btheta)
 <H(X_j, \bZ_j;\btheta)\big\}\Big].
\end{split}\label{eq-cindex}
\end{align}
Expression (\ref{eq-cindex}) is a Harrell-type C-index for right-censored data \citep{harrell1982evaluating} with the linear predictor replaced by the risk index $H(X, \bZ;\btheta)$. $C_n(\btheta)$ measures how well $H(X, \bZ;\btheta)$ ranks subjects according to their observed survival times, with larger values corresponding to higher risk and earlier events. C-index values near 1 indicate near-perfect ranking, and values near 0.5 indicate random ranking. 
Maximizing $C_n(\btheta)$ yields $\widehat{\btheta}=\text{argmax}_{\btheta \in \Theta}C_n(\btheta)$, the maximum C-index estimator (MCE). 

The MCE recovers the true parameter $\btheta_0$ because the population objective $\E\{C_n(\btheta)\}$ attains its maximum at $\btheta_0$. Taking the expectation of (\ref{eq-cindex}) with respect to $(X_i,\bZ_i,X_j,\bZ_j)$ gives:
\begin{align*}
    \E \{C_n(\btheta)\} = \E_X\Big[&P(Y_i<Y_j,\Delta_i=1)\ind\big\{H(X_i,\bZ_i;\btheta)>H(X_j,\bZ_j;\btheta)\big\}+\\
&P(Y_i>Y_j,\Delta_j=1)\ind\big\{H(X_i,\bZ_i;\btheta)<H(X_j,\bZ_j;\btheta)\big\}\Big],
\end{align*}
which the stochastic ordering shows is maximized at $\btheta_0$. The sample analogue $C_n(\btheta)$ is therefore maximized at $\hat \btheta$. Although \cite{uno2011cstat} shows that the Harrell-type C-index can be biased upward as a measure of discriminatory accuracy in the presence of heavy censoring, this does not affect the consistency of our parameter estimator $\hat\btheta$: the ranking relationships that identify $\btheta_0$ are preserved under conditionally independent censoring, so $\hat\btheta$ converges to $\btheta_0$ regardless of the censoring distribution (see Web Appendix A). Under mild regularity conditions, $\hat \btheta$ uniquely maximizes $C_n(\btheta)$ and is consistent (\citealp{khan2007partial}; see Web Appendix A).

The MCE is a rank-based M-estimator defined through pairwise comparisons, enabling application of the asymptotic theory of \cite{sherman1993} for maximum rank correlation. 
We note that although the asymptotic theory builds on the framework of \cite{sherman1993} for maximum rank correlation estimators, the replacement of the monotone linear index with the non-monotone, non-differentiable function $H(X,\bZ;\btheta)$ requires adaptation of the standard arguments. The piecewise linear structure of the risk index $H(X,\bZ; \btheta)$ and the Euclidean property of its induced function class are exploited in Web Appendix A to handle the non-differentiability at the critical point boundary. These arguments yield the $\sqrt{n}$-consistency and asymptotic normality of the MCE estimator stated in Theorem \ref{thm-1}.

\begin{theorem}\label{thm-1}
Under model (\ref{semi_model}) and assumptions~(\ref{as:G})--(\ref{as:reg}) specified in Web Appendix A, 
$$
\sqrt{n}(\hat\btheta - \btheta_0) \xrightarrow{d} N(0, \bV^{-1}\bU \bV^{-1}).
$$ 
Here, $\bV=2^{-1}\E\big[\boldsymbol{\nabla_2\tau}(\bW_i;\btheta_0)\big]$, $\bU=\E\big[\boldsymbol{\nabla_1}\tau(\bW_i;\btheta_0)\boldsymbol{\nabla_1\tau}(\bW_i;\btheta_0)^{\T}\big]$, where \newline
$\bW_i = (Y_i,\Delta_i, X_i, \bZ_i)^{\T}$, 
and $\nabla_1$ and $\nabla_2$ denote the first- and second-order derivatives of the pairwise estimating function $\tau$ with respect to $\btheta$.
The pairwise estimating function $\tau(\bW_i;\btheta)$ is defined as
\begin{align*}
\begin{split}
\tau(\bW_i,\boldsymbol{\theta})=&\E\Big[\mathds{1} \big\{Y_i<Y,\Delta_i=1\big\}\mathds{1}\big\{H(X_i,\bZ_i;\btheta)>H(X,\bZ;\boldsymbol{\theta})\big\}\Big]\\
& +\E\Big[\mathds{1}\{Y<Y_i,\Delta=1\}\mathds{1}\big\{H(X,\bZ;\boldsymbol{\theta})>H(X_i,\bZ_i;\btheta)\big\}\Big].
\end{split}
\end{align*}
\end{theorem}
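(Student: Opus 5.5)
The argument follows Sherman's (1993) two-step route for maximum rank correlation estimators. Write $C_n(\btheta)-C_n(\btheta_0)$ as a second-order U-process of order two, $\Gamma_n(\btheta)=\{n(n-1)\}^{-1}\sum_{i\neq j}f(\bW_i,\bW_j;\btheta)$. The kernel is
\[
f(\bW_i,\bW_j;\btheta)=\kappa(\bW_i,\bW_j;\btheta)-\kappa(\bW_i,\bW_j;\btheta_0),
\]
where $\kappa$ is the symmetrized summand of (\ref{eq-cindex}); the diagonal terms and the $n^{-2}$ versus $\{n(n-1)\}^{-1}$ normalization contribute only $O(n^{-1})$. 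Its population version is $\Gamma(\btheta)=\E\{\tau(\bW_i;\btheta)\}-\E\{\tau(\bW_i;\btheta_0)\}$.

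The Hoeffding decomposition then gives
\[
\Gamma_n(\btheta)=\Gamma(\btheta)+n^{-1}\sum_i g(\bW_i;\btheta)+\rho_n(\btheta),
\]
where $g(\bw;\btheta)=\tau(\bw;\btheta)-\tau(\bw;\btheta_0)-2\Gamma(\btheta)$ and $\rho_n$ is a degenerate U-process.

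\emph{Step 1 (consistency).} I take consistency as given, invoking the identification argument above and the Web Appendix A result, via \cite{khan2007partial}. Uniqueness of the maximizer of $\E\{C_n(\btheta)\}$ at $\btheta_0$ comes from the stochastic ordering under censoring. Uniform convergence of $C_n$ follows once the class $\{\kappa(\cdot,\cdot;\btheta):\btheta\in\Theta\}$ is shown to be Euclidean. For that, $\{H(X_i,\bZ_i;\btheta)>H(X_j,\bZ_j;\btheta)\}$ is a finite Boolean combination of sets of the form $\{\text{linear function of }\btheta\text{-weighted data}>0\}$, because $\max(a,b)>\max(c,d)$ decomposes into comparisons of the four affine pieces. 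Each such set is a VC class, so the products with the fixed indicators $\ind(Y_i<Y_j,\Delta_i=1)$ are Euclidean (Pakes--Pollard). The degenerate part then satisfies $\sup_{\|\btheta-\btheta_0\|\le\delta_n}|\rho_n(\btheta)|=o_p(n^{-1})$ by Sherman's Corollary 8 and Theorem 3, using $\E|f|\to0$ as $\btheta\to\btheta_0$.

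\emph{Step 2 (local quadratic expansion).} I will show that $\tau(\bw;\btheta)$ is twice differentiable in a neighbourhood of $\btheta_0$, with
\[
\tau(\bw;\btheta)-\tau(\bw;\btheta_0)=(\btheta-\btheta_0)^{\T}\nabla_1\tau(\bw;\btheta_0)+\tfrac12(\btheta-\btheta_0)^{\T}\nabla_2\tau(\bw;\btheta_0)(\btheta-\btheta_0)+R(\bw;\btheta),
\]
where $|R|\le M(\bw)\|\btheta-\btheta_0\|^3$, $\E M<\infty$, $\E\nabla_1\tau(\bW;\btheta_0)=0$, and $\E\nabla_2\tau(\bW;\btheta_0)$ is negative definite. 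The zero mean follows because $\btheta_0$ maximizes $\E\tau$. This is the main obstacle. The map $\btheta\mapsto H(\bw;\btheta)$ is not differentiable on the kink set $\{-x+\bz^{\T}\balpha_1=\beta_0+\beta_1x+\bz^{\T}\balpha_2\}$.

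My plan is to write $\tau(\bw;\btheta)$ as an integral over $(x,\bz)$ of
\[
S_1(y,x,\bz)\,\ind\{H(x,\bz;\btheta)<h\}+\ldots,\qquad h=H(\bw;\btheta),
\]
and to condition on $\bZ$. For fixed $\btheta$ and $\bz$, the set $\{x:H(x,\bz;\btheta)<h\}$ is an interval $(x_L,x_R)$, since $H$ is V-shaped in $x$. Here $x_L=-h+\bz^{\T}\balpha_1$ and $x_R=(h-\beta_0-\bz^{\T}\balpha_2)/\beta_1$ are smooth in $\btheta$, because $\beta_1>0$ is bounded away from zero on $\Theta$. So the indicator integral becomes $\int_{x_L}^{x_R}(\cdot)\,dF_{X|\bZ}$, which is smooth in $\btheta$ provided $X\mid\bZ$ has a density with bounded derivatives and the conditional survival quantities are smooth. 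The dependence on $\bw$ through $h=H(\bw;\btheta)$ is nonsmooth only on the kink set for $\bw$ itself. That set has probability zero under continuity of $X$, and it can be handled by taking the expectation over $\bW_i$ first, which is exactly what defines $\bV$ and $\bU$. This is the reason regularity conditions (\ref{as:G})--(\ref{as:reg}) should include continuity of $X$ given $\bZ$ and full rank of $\bZ$.

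\emph{Step 3 (rate and normality).} Combining the pieces gives, uniformly over $o_p(1)$ neighbourhoods,
\[
\Gamma_n(\btheta)=\tfrac12(\btheta-\btheta_0)^{\T}\bV(\btheta-\btheta_0)+n^{-1/2}(\btheta-\btheta_0)^{\T}\bW_n+o_p(\|\btheta-\btheta_0\|^2)+o_p(n^{-1}).
\]
Here $\bV=2^{-1}\E\nabla_2\tau$. The vector $\bW_n=n^{-1/2}\sum_i\nabla_1\tau(\bW_i;\btheta_0)$ converges in distribution to $N(0,\bU)$ by the CLT, with the factor conventions matching the symmetrized kernel. Sherman's Theorem 1 then yields $\sqrt n$-consistency, and Theorem 2 yields
\[
\sqrt n(\hat\btheta-\btheta_0)=-\bV^{-1}\bW_n+o_p(1),
\]
hence the limit $N(0,\bV^{-1}\bU\bV^{-1})$.
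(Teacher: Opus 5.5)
Your overall route is the paper's: $C_n$ as an order-two $U$-process, the Hoeffding decomposition into $\Gamma$, a linear empirical process and a degenerate part, the Euclidean property from splitting $\{\max(a_1,b_1)<\max(a_2,b_2)\}$ into affine half-spaces (Pakes--Pollard), Sherman's Theorem~3 for the degenerate part, and Sherman's Theorems~1--2 to conclude. The genuine gap is Step~2, which is the one place you depart from the paper. The paper does not derive the expansion of $\tau$: it is assumption~(7)(i)--(v), part of the theorem's hypotheses, and the proof just Taylor-expands under it.

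Your derivation breaks exactly where you wave the kink away. For fixed $\bw=(y,\delta,x,\bz)$, write $\tau(\bw;\btheta)=\Phi_{y,\delta}\{\btheta,H(x,\bz;\btheta)\}$. Then $\partial_t\Phi_{y,\delta}(\btheta_0,t)=g_0(t)[\delta P\{Y>y\mid H=t\}-P\{Y<y,\Delta=1\mid H=t\}]$, with $g_0$ the density of $H(X,\bZ;\btheta_0)$, is generically nonzero. Meanwhile $\btheta\mapsto H(x,\bz;\btheta)$ has gradient jump $(1,x,-\bz^{\T},\bz^{\T})$ across the hyperplane $K_{\bw}=\{\btheta: x=x_c(\bz;\btheta)\}$. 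The distance of $K_{\bw}$ from $\btheta_0$ is of order $|x-x_c(\bz;\btheta_0)|$.

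Hence $\nabla_{\btheta}\tau(\bw;\cdot)$ jumps inside every neighbourhood of $\btheta_0$ for a set of $\bw$ of positive probability; ``probability zero'' covers only $\btheta_0\in K_{\bw}$. For such $\bw$, take $|\btheta-\btheta_0|$ about twice the distance to $K_{\bw}$. Then $|R(\bw;\btheta)|\asymp|x-x_c|$, which forces $M(\bw)\gtrsim|x-x_c(\bz;\btheta_0)|^{-2}$, and that is not integrable. Your interval formula has an analogous kink where $(x_L,x_R)$ collapses, whenever $\bZ$ has atoms (as with the paper's binary covariates).

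Taking expectations over $\bW_i$ first can rescue $\Gamma(\btheta)$. The jump terms there are $o(|\btheta-\btheta_0|^2)$, because at $\btheta_0$ the jump has conditional mean zero given $(X,\bZ)$: $T$ and $T'$ are i.i.d.\ at equal index, and only $\min(C,C')$ enters. But it does nothing for $P_ng$, which Sherman's argument controls observation by observation.

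For the theorem as stated, you close the gap by invoking (7), as the paper does, although the same kink makes (7)(i)--(ii) hard to satisfy literally. To work from primitive conditions instead, replace the pointwise cubic bound by an $L^2$ bound. The linearization remainder is $O(|\btheta-\btheta_0|)$ only on a set of $\bw$ of probability $O(|\btheta-\btheta_0|)$, so $\E R(\bW;\btheta)^2=o(|\btheta-\btheta_0|^2)$. A stochastic-equicontinuity (peeling) argument for the Euclidean remainder class should then give $(P_n-P)R=o_p(|\btheta-\btheta_0|/\sqrt{n})$, which is all Sherman's Theorems~1--2 require.

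Three smaller points:
\begin{itemize}
\item With your symmetrized kernel and $\{n(n-1)\}^{-1}$ normalization, the projection is $2\{\tau(\bw;\btheta)-\tau(\bw;\btheta_0)\}-2\Gamma(\btheta)$. Your $g$ as written does not have mean zero, and your final expansion is off by an overall factor of $2$ (harmless for the argmax).
\item What you actually need is that the diagonal terms are exactly zero and that $n^{-2}$ versus $\{n(n-1)\}^{-1}$ is a $\btheta$-free factor (the paper writes $C_n=2(1-n^{-1})U_nf$). A $\btheta$-dependent additive $O(n^{-1})$ error would not be negligible at the $o_p(1/n)$ scale of Sherman's Theorem~2.
\item Uniqueness of the population maximizer does not follow from stochastic ordering alone. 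The paper proves a separate identification lemma: an a.s.\ identical ordering forces $\btheta=\btheta_0$. That lemma uses the positive density of $X\mid\bZ$ on both branches and the rank condition~(5).
\end{itemize}
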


The variance components $\bU$ and $\bV$ admit tractable one-dimensional forms by conditioning on the risk index $H(X_i,\bZ_i;\btheta_0)$ rather than on the full covariate vector, following the dimension-reduction argument in \cite{sherman1993}. The explicit one-dimensional integral expressions for $\bU$ and $\bV$, together with the resulting plug-in variance estimator are given in Web Appendix A.

The asymptotic normality established in Theorem \ref{thm-1} provides the basis for inference on $\btheta$. Although the limiting variance has the closed-form representation $\bV^{-1}\bU\bV^{-1}$, estimating $\bU$ and $\bV$ requires numerical differentiation and bandwidth selection; we therefore implement the inference through the bootstrap, which avoids these choices. We approximate the sampling distribution of 
$(\hat{\beta}_0,\log \hat {{\beta}}_1,\hat\balpha_1^{\T}, \hat\balpha_2^{\T})$, where the log transformation maps the constrained parameter $\beta_1 > 0$ to the real line so that the symmetric Wald interval applies. 
Let $\hat{\beta}_{0}^{\ell}$, $\log\hat\beta_{1}^{\ell}$, $\hat\balpha_{1}^{\ell}$ and $\hat\balpha_{2}^{\ell}$, $\ell=1,...,n_{\rm boot}$, denote the estimates from $n_{\rm boot}$ bootstrap samples, and let $SE_{\rm boot}$ denote the standard deviation of bootstrap estimates. The $100(1-\gamma)\%$ confidence intervals for $(\hat{\beta}_0,\log\hat\beta_1,\hat\balpha_1,\hat\balpha_2)$ are
\begin{align*}
&\big(\hat{\beta}_{0}-z_{1-\gamma/2}\,SE_{\rm boot}(\hat{\beta}_{0}),\;\hat{\beta}_{0}+z_{1-\gamma/2}\,SE_{\rm boot}({\hat \beta}_{0})\big),\\
&\big(\log\hat {\beta}_{1}-z_{1-\gamma/2}\,SE_{\rm boot}(\log\hat{\beta}_{1}),\;\log\hat \beta_{1}+z_{1-\gamma/2}\,SE_{\rm boot}(\log\hat {\beta}_{1})\big),\\
&\big(\hat{\balpha}_{1}-z_{1-\gamma/2}\,SE_{\rm boot}(\hat \balpha_1),\;\hat\balpha_1+z_{1-\gamma/2}\,SE_{\rm boot}(\hat  \balpha_1)\big),\\
&\big(\hat\balpha_2-z_{1-\gamma/2}\,SE_{\rm boot}(\hat \balpha_2),\;\hat\balpha_2+z_{1-\gamma/2}\,SE_{\rm boot}(\hat \balpha_2)\big),
\end{align*}
respectively.

\section{Smoothed Kaplan-Meier Estimation}\label{sec:estimation-infinite}
Beyond estimating the parametric components $\btheta$, we propose an estimator for the event risk given biomarker $X_i$ and risk factors $\bZ_i$. Since the nonparametric model components (transformation function $G(\cdot)$ and distribution of error term $\epsilon$) are not individually identifiable, they are identified conjointly in the form of the survival probability conditional on the risk index $H(X_i,\bZ_i;\btheta)$. Specifically, we estimate this conditional survival probability using a smoothed Kaplan-Meier estimation based on kernel weighting. This approach is related to prior work on marker-dependent survival estimation under censoring \citep{altman1992introduction, heagerty2000time}.

Let $(Y_i, \Delta_i)$ denote the observed follow-up time and event indicator, and let $y_{(1)},y_{(2)},\dots, y_{(K)}$ be the ordered uncensored survival times. For a given covariate profile $(X_i, \bZ_i)$, write $\hat H_i=H(X_i, \bZ_i;\hat \btheta)$. For any query value $h_0$, we assign observation-specific weights using a kernel function $K_h(\cdot)$ applied to the difference $h_0-\hat H_j$. Here, $K_h(\hat H_j,\hat H_i)$ denotes a kernel function applied to the difference $\hat H_j-\hat H_i$, so that observations with similar risk index receive large weights while those farther away receive smaller weights. At each survival time $y_{(k)}$, we define the kernel-weighted number of events and the kernel-weighted risk set as the weighted sums over subjects experiencing the event at $y_{(k)}$ and those remaining at risk at $y_{(k)}$, respectively. The smoothed Kaplan-Meier estimator of the conditional survival function is then given by
\begin{align*}
\begin{split}
\widehat{S}_h(t|X_i, \bZ_{i}) = \prod_{y_{(k)}\leq t} \Big\{ 1-\frac{\sum_{j=1}^n K_h(\widehat{H}_j, \widehat{H}_i) \ind(Y_j = y_{(k)}) \Delta_j }{\sum_{j=1}^n K_h(\widehat{H}_j, \widehat{H}_i) \ind(Y_j\geq y_{(k)})} \Big\}.
\end{split}
\end{align*}
Event risk at time $t$ is estimated by $1-\hat S_h(t|X_i,\bZ_i)$. This kernel-weighted formulation smooths the Kaplan-Meier estimator over the risk index space and provides a flexible nonparametric estimate of the conditional survival function given $\hat H_i$.

Kernel choice has minimal effect on the estimator; selection depends on data characteristics. Among common options---nearest-neighbor, Epanechnikov, and Gaussian kernels---nearest-neighbor kernels adaptively set the local neighborhood size and are advantageous where observations are sparse, as at the extremes of the biomarker range, while Gaussian kernels provide smooth, numerically stable weighting. We accordingly use a K-nearest-neighbor kernel in the simulation study, where the biomarker range is sparsely sampled at its ends, and a Gaussian kernel in the real-data analysis.

The bandwidth, by contrast, governs the bias-variance tradeoff and is the main determinant of the estimator's behavior. Smaller bandwidths reduce bias but increase variance, whereas larger bandwidths yield smoother estimates with potentially higher bias. Data-driven procedures such as leave-one-out cross-validation and plug-in methods are available for its selection. We tune the bandwidth in each setting and assess the robustness of the resulting risk estimates through bootstrap resampling.

\section{Critical Point and Critical Region}\label{sec:estimation-critical-point}
 
The critical point identifies the biomarker value of minimal event risk, whereas the critical region identifies the range of biomarker values whose event risk remains close to the minimum.
In this section, we present estimation and inference for these quantities. 

In model (\ref{semi_model}), the critical point is the breakpoint where the two linear components inside the maximum intersect, which is $X_c=(1+\beta_1)^{-1}\cdot(-\beta_0+\bZ^{\T}\balpha_1-\bZ^{\T}\balpha_2).$ Plugging in $\hat\btheta=(\hat\beta_0, \hat\beta_1, \hat\balpha_1, \hat\balpha_2)$, the estimator of the critical point is $\hat X_{c}=(1+\hat \beta_{1})^{-1}\cdot(-\hat \beta_{0}+ \bZ^{\T}\hat\balpha_{1}-\bZ^{\T}\hat \balpha_{2})$. 
Let $g(\hat\btheta)=(1+\hat \beta_{1})^{-1}\cdot(-\hat \beta_{0}+\bZ^{T}\hat\balpha_{1}-\bZ^{T}\hat\balpha_{2})$. By consistency of the MCE estimator $\hat\btheta$ and the continuity of $g$, 
the Delta method gives
$$
    \sqrt{n}(\hat X_{c}-X_c)\rightarrow N\big(0,\nabla g^{\T} V^{-1}U V^{-1}\nabla g\big)
$$
in distribution, where
$$
\nabla g = \Big[-(1+\hat \beta_{1})^{-1},\;
    (1+\hat \beta_{1})^{-2}(\hat \beta_{0}-\bZ^{\T}\hat \balpha_{1}+\bZ^{\T}\hat\balpha_{2}),\;
    (1+\hat \beta_{1})^{-1}\bZ,\;
    -(1+\hat \beta_{1})^{-1}\bZ\Big]^{\T}.
$$ 
The asymptotic variance of $\hat X_c$ is $\text{Var}(\hat X_c) = n^{-1}\nabla g^{\T} V^{-1} U V^{-1} \nabla g$. In practice, we estimate $\text{Var}(\hat X_c)$ via bootstrap, which avoids numerical derivatives and bandwidth selection required for direct estimation of $U$ and $V$. This yields the Wald confidence interval for the critical point, $\hat X_c \pm z_{1-\gamma/2}\sqrt{\widehat{\operatorname{Var}}(\hat X_c)}$.

Given a risk threshold $a$ and a fixed time point $t_0$, the corresponding critical region is the set of biomarker values satisfying
$$
    P(T \leq t_0|X, \bZ) = F_\epsilon\big\{G^{-1}(t_0) + \max(-X+\bZ^{\T}\balpha_1, \beta_0 + \beta_1 X + \bZ^{\T}\balpha_2)\big\} \leq a,
$$
where $F_\epsilon$ is the cumulative distribution function of $\epsilon$. This reduces to finding values of $X$ such that $\max(-X+\bZ^{\T}\balpha_1, \beta_0 + \beta_1 X + \bZ^{\T}\balpha_2) \leq F_\epsilon^{-1}(a) - G^{-1}(t_0) = c$. 
Because the left-hand side is minimized at $X_c$, the critical region is empty if $c<H(X_c,\bZ;\btheta)$. Otherwise, the general form of the critical region is $\big[\bZ^{\T}\balpha_1-c,\;\beta_1^{-1}\cdot(c-\beta_0-\bZ^{\T}\balpha_2)\big]$. 
Although this expression gives the population-level form of the critical region, $G$ and $F_\epsilon$ are not individually identifiable. Therefore, in practice, we estimate the critical region using the smoothed Kaplan--Meier estimator described in Section \ref{sec:estimation-infinite}. Specifically, we estimate the boundaries of the critical region by finding biomarker values at which the estimated event probability $P(T\le t_0\mid X,\bZ)$ reaches $a$.

\section{Simulation Study}\label{sec:simulation}
We evaluate MCE finite-sample performance and compare it with Cox-based alternatives across sample sizes, censoring rates, transformations, and error distributions.

Data are generated from model $T = G\big\{-\max(-X+Z_1\alpha_1, \beta_0 + \beta_1 X + Z_2\alpha_2) + \epsilon\big\}$ with biomarker $X\sim\text{Uniform}(-5,10)$, continuous covariate vector $Z_1\sim N(0,1)$ and binary covariate $Z_2\sim\text{Bernoulli}(0.5)$, intercept $\beta_0=0$, U-shape slopes $\alpha_1=3$ and $\alpha_2=-3$, and right-arm slope $\beta_1\in\{1,2\}$. The error $\epsilon$ follows either $N(0,9)$ or a centered minimum-extreme-value distribution with the same standard deviation. We consider two transformations: the logistic $G(s) = 10\,\text{plogis}(s/5)$ and the exponential $G(s) = \exp\{(s+1)/5\}$. Independent censoring is generated as $C\sim\text{Uniform}(c_l,c_u)$, with $(c_l,c_u)$ tuned per scenario to target a censoring rate of approximately $15\%$, $30\%$, or $50\%$. The main simulation fixes $\beta_1=2$ and varies $n=200,500,1000$, transformation function $G$, and distribution of error term $\epsilon$, giving twenty scenarios (S01--S20); three additional scenarios (S21--S23) fix $\beta_1=1$ with logistic $G$ and min-EV $\epsilon$ to verify performance under a milder U-shape risk pattern. We generate $1{,}000$ replications per scenario.

The error, transformation, and U-shape choices are designed for investigation of distinct robustness features. The normal $N(0,9)$ error is symmetric and a commonly adopted error distribution; the minimum-extreme-value error corresponds to the Cox proportional hazards specification and provides the closest stress test scenario against Cox models that are alternative to our proposed method. The logistic and exponential transformations $G$ realize qualitatively different baseline-hazard shapes. The two right-arm slopes $\beta_1=1,\,2$ separate a pronounced U-shape ($\beta_1=2$) from a milder regime ($\beta_1=1$). Effective sample size across simulation scenarios ranges from $\sim 100$ ($n=200$, $50\%$ censoring) to $\sim 850$ ($n=1000$, $15\%$ censoring).

We obtain MCE point estimates by maximizing the empirical concordance index $C_n(\btheta)$ defined in Section~\ref{sec:estimation-finite} via differential evolution \citep{mullen2011deoptim}, and estimate standard errors via the standard nonparametric bootstrap, and $500$ bootstrap replicates are used per sample. Wald-type $95\%$ confidence intervals are constructed using the bootstrap standard error for the parameters $(\beta_0,\log\beta_1,\alpha_1,\alpha_2)$ and the delta-method standard error for $X_c$. 

We compare the MCE against four feasible Cox models and one infeasible oracle, chosen to span the flexibility-parsimony tradeoff practitioners face when no explicit change-point method is available. The natural-spline Cox model (natural cubic spline with four degrees of freedom) is the most flexible alternative but does not parameterize a change point, so $X_c$ must be recovered indirectly by minimizing the fitted linear predictor over $X$, which is sensitive to boundary noise. The quadratic Cox model (linear plus quadratic biomarker terms) is the minimal-parameter alternative but constrained to symmetric U-shapes. Each is also extended with biomarker--covariate interaction terms, allowing covariate-dependent shifts in the two arms. The oracle piecewise-linear Cox model uses subject-specific critical points $X_c(\bZ_i)$ set at the true values and fits branch-specific slopes, covariate effects, and a right-branch intercept; it is infeasible (true $X_c(\bZ_i)$ unknown in practice) but provides the parametric upper bound for any pre-specified-knot Cox formulation. For each feasible alternative, $X_c$ is the minimizer of the estimated linear predictor over $X$ holding $\bZ=(Z_1,Z_2)=(0.5,1)$ fixed. The MCE is distribution-free in $G$ and $\epsilon$ and correctly specified across all simulation settings. The four feasible Cox alternatives are misspecified in every setting: under minimum-extreme-value error the Cox proportional hazards family is correct but no alternative parameterization contains the true model as a special case; under normal error the Cox proportional hazards family itself is incorrect. The piecewise-linear oracle Cox model contains the true model: under minimum-extreme-value error (with the current $\beta_0 = 0$ intercept convention) it is correctly specified, while under normal error it remains the best Cox approximation given correct branch assignment.

Table~\ref{tab:sim-main} reports bias, empirical standard error (ESE), mean of estimated standard errors (ASE), and empirical $95\%$ coverage probability (ECP) for $(\beta_0,\log\beta_1,\alpha_1,\alpha_2,X_c)$ across all 23 simulation scenarios. Across all scenarios the MCE estimator is nearly unbiased: the bias of $\hat X_c$ is below $0.10$ at $n=200$ and below $0.01$ at $n=1000$, and the bias of each $(\beta_0,\log\beta_1,\alpha_1,\alpha_2)$ is at most one-third of its empirical standard error. The empirical standard errors scale as $n^{-1/2}$, consistent with the asymptotic theory of Section~\ref{sec:estimation-finite}. At $n=500$ and $n=1000$, the bootstrap standard errors closely track or modestly exceed the empirical standard errors, and Wald $95\%$ confidence intervals attain coverage at or slightly above the nominal level for all four parameters and $X_c$, reflecting the mild conservatism characteristic of nonparametric bootstrap inference for rank-based objective functions. At $n=200$, coverage falls modestly below nominal, particularly for $\log\beta_1$ ($\text{ECP}\approx 0.83\text{--}0.86$) and $X_c$ ($\text{ECP}\approx 0.84\text{--}0.87$);  we view this as a finite-sample limitation when the effective sample size (number of observed events) is small. For practical inference, we recommend an effective sample size of at least $350$ events, corresponding to total $n\geq 500$ at $\approx 30\%$ censoring or $n\geq 700$ at $\approx 50\%$ censoring. Performance improves uniformly with sample size across all parameters and critical-point quantities.

Figure~\ref{fig:sim-xc} and Table~\ref{tab:sim-cindex} (Web Appendix C) report critical-point estimation accuracy across the four feasible Cox competitors, the MCE, and the oracle benchmark; Figure~\ref{fig:sim-cindex} and Table~\ref{tab:supp-cindex-all} (Web Appendix C) summarize the corresponding test-set Harrell's $C$-index. The MCE recovers $X_c$ with negligible bias ($|\text{Bias}|\leq 0.10$ at $n=200$ and $\leq 0.01$ at $n=1000$) and coverage at or near the nominal $95\%$ level. The four Cox alternatives, in contrast, exhibit systematic bias of $-0.5$ to $-2.6$ that does not diminish with $n$; their nominal $95\%$ confidence intervals attain below-nominal coverage in every scenario, falling to essentially $0\%$ for the additive specifications. This pattern follows directly from the compatibility analysis above: because none of the Cox alternatives' parameterizations contain $H$ as a special case, the $X_c$ recovered by minimizing each fitted linear predictor corresponds to the minimum of the best-approximating Cox surface rather than the true critical point, and this projection error persists at any sample size. The MCE, by parameterizing $X_c$ directly as a function of $(\beta_0, \log\beta_1, \alpha_1, \alpha_2)$, both estimates the critical point consistently and produces bootstrap-based delta-method confidence intervals that achieve nominal coverage.

The global test-set Harrell's $C$-index tells a more compressed story: against the four feasible Cox competitors---none of which knows the true critical point---the MCE achieves higher $C$-index by $2$--$3\%$ uniformly across all scenarios. Such consistent gains are noteworthy given that the $C$-index is known to be relatively insensitive to differences between risk models \citep{cook2007roc}. The oracle Cox serves as an unrealistic upper bound that uses each subject's true $X_c(\bZ)$: under the minimum-extreme-value error specification (where Cox proportional hazards is correct), the oracle matches or marginally exceeds the MCE; under the normal error specification (where Cox proportional hazards is the wrong family), the MCE matches or exceeds even the oracle. Together, the contrast between the compressed $C$-index gap and the dramatic $X_c$ gap underscores that the methods differ qualitatively in what they estimate: the explicit U-shape parameterization of the MCE recovers the change-point structure that alternative parameterizations only approximate, without requiring any knowledge of the truth. Figures~\ref{fig:sim-xc} and~\ref{fig:sim-cindex} provide a visual summary of both findings across all $23$ scenarios. In Figure~\ref{fig:sim-xc}, panel~(a) (main $(G,\epsilon)$ grid at $\beta_1=2$, $30\%$ censoring) shows the systematic leftward bias of all four Cox alternatives versus the MCE's near-zero centering, while panels~(b) and~(c) confirm that the same separation persists under heavier or lighter censoring and under the milder $\beta_1=1$ U-shape. Figure~\ref{fig:sim-cindex} displays the corresponding compressed $C$-index gap, together with the MCE's parity with---or marginal advantage over---the oracle, in matching three-panel layout.

These results confirm that the MCE provides unbiased point estimates, asymptotically valid Wald inference for all model parameters---including the critical point $X_c$, the central estimand of this paper---and improved predictive $C$-index relative to commonly used Cox extensions, across the range of $G$, $\epsilon$, and censoring conditions tested. At the smallest sample size studied, valid inference on $\log\beta_1$ becomes fragile and benefits from a moderately larger sample; inference for the critical point $X_c$ attains nominal coverage at moderate sample sizes (effective sample size of roughly $350$ events or more) across every scenario.

\section{Real Data Analyses on UK Biobank Data}\label{sec:data-analyses}

We apply the MCE to UK Biobank data to estimate the association between baseline BMI and all-cause mortality, adjusted for gender and age. We fit the model
$$
T = G\big\{-\max(-X+\bZ^{\T}\balpha_1, \beta_0 + \beta_1 X + \bZ^{\T}\balpha_2) + \epsilon\big\},
$$
where $T$ is the time from enrollment to death or censoring, $X$ is baseline BMI, $\bZ = (Z_1, Z_2)^{\T}$ denotes the covariates including sex and age group.

Initial values for MCE optimization are obtained by exploiting the U-shaped structure of $H(X,\bZ; \btheta)$: at any subgroup-specific risk level, the two biomarker roots lying on opposite branches satisfy a linear relationship whose slope and intercept identify $\beta_0$ and $\beta_1$. Empirical paired roots are extracted from LOESS-smoothed incidence rates on a binned biomarker grid, then refined via COBYLA optimization with jitter-based robustness checks; full derivations and algorithmic details are given in Web Appendix B.

The UK Biobank is a large-scale, population-based prospective cohort that recruited 502,188 participants aged 38-73 years (at the time of recruitment) from across the United Kingdom between 2006 and 2010. All volunteers gave electronic informed consent, completed touchscreen questionnaires, and underwent standardized physical examinations. Baseline data were subsequently linked to National Health Service Hospital Episode Statistics, cancer registries, and national death records, providing passive, longitudinal follow‑up of morbidity and mortality.

We merged the participant data with death registry records using participant IDs. Dates of death were converted to calendar dates, and follow-up was censored at December 31, 2023, corresponding to the end of the death registry linkage period.
Participants were excluded if they were missing identifiers, lacked BMI records, or had extreme BMI values outside the range of 15–40 kg/m$^2$. After applying these criteria, 486,944 remained, of whom 42,076 experienced an event (defined as death after baseline) and 444,868 were censored.
The censoring date was taken as the earliest of lost-to-follow-up or the registry end date. BMI at baseline was denoted by $X_i$.
Covariates were defined as: $\bZ=(Z_1,Z_2)$, where $Z_1=1$ if male and 0 if female, and $Z_2=1$ if age at recruitment was greater than 65 years and 0 otherwise.
To reduce computational burden while preserving subgroup balance, we randomly selected 10,000 participants for model fitting. 
The estimated parameters of our model are shown in Table \ref{tab:est_result1}.

To estimate the survival probability function, we apply kernel-weighted Kaplan--Meier smoothing followed by isotonic regression via the pool-adjacent-violator algorithm (PAVA) \citep{mammen1991estimating}. Although $S(t;X,\bZ)$ is monotone in $H(X,\bZ;\btheta)$ by construction, finite-sample variation and kernel smoothing can introduce small violations; PAVA applied to $\hat S(t\mid H_{(i)})$ sorted by increasing $H_{(i)}$ restores monotonicity, after which $\tilde{R}(t|H)=1-\tilde{S}(t|H)$ yields smooth and theoretically consistent U-shaped risk curves with respect to $H(X,\bZ;\hat\btheta)=\max(-X+\bZ^{\T}\hat\balpha_1, \hat\beta_0 + \hat\beta_1 X + \bZ^{\T}\hat\balpha_2)$, where $X$ denotes BMI and $\bZ$ represents subgroup indicators.

As shown in Figure~\ref{fig:fix_time}, the estimated cumulative risk varies nonlinearly with BMI at both 10 and 15 years of follow-up, with elevated risk at both low and high BMI levels and a well-defined minimum corresponding to the subgroup-specific critical point. The location of this minimum differs systematically by age: individuals younger than 65 tend to have lower critical point values than those aged 65 or older (Table~\ref{tab:est_result1}), suggesting that the optimal BMI range may vary across life stages. The estimated risk curves also enable identification of the critical region for any given risk threshold by inverting the curve, as described in Section \ref{sec:estimation-critical-point}. For females under 65 at 15 years of follow-up, setting a risk threshold of $a=0.05$, the estimated critical region is (20.5, 30.0) kg/m$^2$, indicating the BMI range associated with a 15-year all-cause mortality risk below 5\%. The 95\% bootstrap confidence intervals for the lower and upper boundaries of this region are $(19.6, 21.0)$ and $(28.5, 31.1)$ kg/m$^2$, respectively.

Figure \ref{fig:fix_bmi} demonstrates that cumulative risk increases monotonically over time at fixed BMI levels. Together, Figures~\ref{fig:fix_time}--\ref{fig:fix_bmi} highlight two complementary aspects of the model: (i) a consistent subgroup-specific U-shaped relationship between BMI and risk at fixed time points, and (ii) steadily increasing cumulative risk trajectories over time at fixed BMI. The empirical patterns align closely with the structural form of the fitted risk index $H(X,\bZ;\hat\btheta)$, providing indirect support for the proposed semiparametric modeling framework.

\section{Discussion}\label{sec:discussions}

To characterize U-shaped biomarker--survival relationships, we developed a semiparametric model that explicitly parameterizes the critical point and proposed a maximum C-index estimator paired with a smoothed Kaplan--Meier estimator for the unspecified transformation. This framework extends change-point methodology to censored survival data and yields direct estimation and inference for the critical point and region---quantities that ad-hoc change-point smoothing methods cannot accommodate.

Our framework has four methodological strengths. (i) Rank-based estimation is scale-invariant and robust to outliers, so results are unaffected by biomarker rescaling. (ii) The semiparametric structure is flexible: the critical point and region are explicitly parameterized while the transformation is unspecified beyond monotonicity, accommodating a wide range of U-shaped patterns without prespecifying the critical point. (iii) Empirical-process theory yields asymptotic properties of the estimators and valid confidence intervals for finite-dimensional parameters and the critical point. (iv) A well-designed initialization strategy circumvents global search for computational efficiency. Simulations confirmed good finite-sample performance: across error distributions, transformations, and censoring rates, the MCE exhibited small bias and MSE in estimating both the parameters and the critical point, with both decreasing as sample size grew.

Our real-data analyses using the UK Biobank cohort provide substantive insights into the association between BMI and all-cause mortality. Across sex and age subgroups, we observed a consistent U-shaped relationship between BMI and cumulative mortality risk, with subgroup-specific differences in the location of the critical point corresponding to the minimum risk.
In particular, the estimated critical point differed systematically by age and sex, with younger and female individuals tending to exhibit lower critical points than older individuals. While the overall U-shaped pattern was stable across subgroups, the width of the confidence intervals for the estimated critical points varied, reflecting differing levels of uncertainty in the identification of an “optimal” BMI across populations.
These findings suggest that although BMI serves as a reproducible population-level risk stratification marker, its precision for defining individualized risk thresholds is limited and heterogeneous across subgroups. Importantly, by providing confidence intervals for the estimated critical points, our method offers a principled way to quantify uncertainty in subgroup-specific optimal BMI values, thereby informing cautious interpretation and potential clinical use.

Although the framework addresses an important class of problems, several extensions merit further development. First, the estimator depends on well-chosen initial values; further automation of initialization could enhance robustness. Second, the current model identifies a single critical point for a single biomarker; natural generalizations include multiple turning points and multivariate biomarkers (via a general tree structure), longitudinal or time-varying biomarkers (for landmark prediction), and competing risks. Finally, the framework assumes a U-shaped risk relationship; formal diagnostics for assessing this assumption and procedures for model comparison among competing functional forms are needed. The method is most applicable when a U-shaped association is suspected and the scientific interest is the critical point; exploratory Kaplan--Meier curves by biomarker quantiles can help assess plausibility before applying the framework.

\backmatter

\section*{Acknowledgements}

This work was supported by the National Institute on Aging (R03AG083470). The content is solely the responsibility of the authors and does not necessarily represent the official views of the National Institutes of Health. The authors are grateful for Megan Clark's assistance in help revising the manuscript.

\textbf{AI-use disclosure.} In accordance with the Journal's policy and the COPE position statement on Authorship and AI, we disclose that a large-language-model assistant (Anthropic Claude, accessed through the Claude Code CLI) was used during the preparation of this manuscript for language-level editing (copyediting, tightening prose, and improving clarity), and LaTeX formatting of tables and figures that the authors subsequently reviewed and revised. All statistical methods, theoretical derivations, simulation designs, data analyses, and scientific conclusions were developed, verified, and are the sole responsibility of the authors. The AI tool did not generate or independently verify any citations, statistical results, or numerical outputs in the manuscript. No AI tool is listed as an author, consistent with the requirement that authorship entails accountability that AI systems cannot bear.
\vspace*{-8pt}

\section*{Conflict of Interest}
The authors declare no conflicts of interest.

\section*{Supplementary Materials}
Web Appendix A, referenced in Sections \ref{sec:estimation-finite} and \ref{sec:estimation-critical-point}, contains regularity assumptions, proofs of consistency and asymptotic normality of the maximum C-index estimator, and the U-statistic decomposition. The supplementary materials are included with this preprint.

\section*{Data Availability Statement}
The data used in this study are available from the UK Biobank (\url{https://www.ukbiobank.ac.uk/}) upon application and approval. Due to data access restrictions, the data cannot be shared publicly. Code used to implement the proposed method and reproduce the analyses is available at \url{https://github.com/Rita-zrFu/Ushape}.

\bibliographystyle{biom} \bibliography{biomsample}

\clearpage
\begin{sidewaystable}[hb]
\centering
\caption{MCE finite-sample performance across all simulation scenarios. }\label{tab:sim-main}
\vspace{-6pt}
\footnotesize
\setlength{\tabcolsep}{3pt}
\renewcommand{\arraystretch}{0.95}
\resizebox{\textwidth}{!}{%
\begin{tabular}{ccccc ccccc ccccc ccccc ccccc ccccc}
\hline
& & & & & \multicolumn{4}{c}{$\beta_0$} & & \multicolumn{4}{c}{$\log\beta_1$} & & \multicolumn{4}{c}{$\alpha_1$} & & \multicolumn{4}{c}{$\alpha_2$} & & \multicolumn{4}{c}{$X_c$} \\
\cline{6-9}\cline{11-14}\cline{16-19}\cline{21-24}\cline{26-29}
$n$ & $G$ & $\epsilon$ & $\beta_1$ & Cens & Bias & ESE & ASE & ECP & & Bias & ESE & ASE & ECP & & Bias & ESE & ASE & ECP & & Bias & ESE & ASE & ECP & & Bias & ESE & ASE & ECP \\
\hline
$200$ & logistic & $N(0,9)$ & $2$ & 15\% & $-0.39$ & $1.57$ & $1.40$ & $0.93$ & & $+0.08$ & $0.33$ & $0.31$ & $0.88$ & & $+0.24$ & $1.01$ & $0.96$ & $0.95$ & & $-0.43$ & $1.50$ & $1.31$ & $0.92$ & & $+0.06$ & $0.50$ & $0.47$ & $0.88$ \\
$200$ & logistic & $N(0,9)$ & $2$ & 30\% & $-0.52$ & $1.75$ & $1.49$ & $0.89$ & & $+0.09$ & $0.35$ & $0.31$ & $0.86$ & & $+0.27$ & $1.09$ & $1.03$ & $0.94$ & & $-0.50$ & $1.59$ & $1.35$ & $0.91$ & & $+0.08$ & $0.55$ & $0.49$ & $0.86$ \\
$200$ & logistic & $N(0,9)$ & $2$ & 50\% & $-0.59$ & $2.12$ & $1.60$ & $0.84$ & & $+0.10$ & $0.41$ & $0.33$ & $0.81$ & & $+0.35$ & $1.36$ & $1.14$ & $0.92$ & & $-0.61$ & $1.73$ & $1.41$ & $0.89$ & & $+0.08$ & $0.68$ & $0.53$ & $0.82$ \\
$500$ & logistic & $N(0,9)$ & $2$ & 15\% & $-0.19$ & $0.94$ & $1.01$ & $0.96$ & & $+0.04$ & $0.21$ & $0.22$ & $0.95$ & & $+0.09$ & $0.56$ & $0.63$ & $0.97$ & & $-0.17$ & $0.84$ & $0.92$ & $0.98$ & & $+0.03$ & $0.31$ & $0.36$ & $0.95$ \\
$500$ & logistic & $N(0,9)$ & $2$ & 30\% & $-0.27$ & $1.13$ & $1.12$ & $0.95$ & & $+0.05$ & $0.24$ & $0.24$ & $0.94$ & & $+0.12$ & $0.61$ & $0.70$ & $0.96$ & & $-0.23$ & $0.93$ & $0.99$ & $0.97$ & & $+0.05$ & $0.36$ & $0.39$ & $0.94$ \\
$500$ & logistic & $N(0,9)$ & $2$ & 50\% & $-0.37$ & $1.50$ & $1.30$ & $0.92$ & & $+0.06$ & $0.30$ & $0.26$ & $0.89$ & & $+0.15$ & $0.81$ & $0.82$ & $0.96$ & & $-0.32$ & $1.21$ & $1.08$ & $0.94$ & & $+0.05$ & $0.48$ & $0.45$ & $0.89$ \\
$1000$ & logistic & $N(0,9)$ & $2$ & 30\% & $-0.08$ & $0.67$ & $0.78$ & $0.97$ & & $+0.01$ & $0.15$ & $0.17$ & $0.97$ & & $+0.03$ & $0.40$ & $0.47$ & $0.97$ & & $-0.07$ & $0.60$ & $0.68$ & $0.97$ & & $+0.01$ & $0.25$ & $0.30$ & $0.96$ \\
$200$ & logistic & min-EV & $2$ & 30\% & $-0.60$ & $1.88$ & $1.50$ & $0.87$ & & $+0.11$ & $0.38$ & $0.31$ & $0.83$ & & $+0.31$ & $1.21$ & $1.05$ & $0.93$ & & $-0.59$ & $1.63$ & $1.35$ & $0.91$ & & $+0.10$ & $0.60$ & $0.50$ & $0.84$ \\
$500$ & logistic & min-EV & $2$ & 30\% & $-0.19$ & $1.11$ & $1.10$ & $0.96$ & & $+0.03$ & $0.24$ & $0.24$ & $0.94$ & & $+0.08$ & $0.62$ & $0.70$ & $0.98$ & & $-0.21$ & $0.94$ & $0.96$ & $0.96$ & & $+0.03$ & $0.37$ & $0.40$ & $0.92$ \\
$1000$ & logistic & min-EV & $2$ & 30\% & $-0.07$ & $0.70$ & $0.79$ & $0.98$ & & $+0.01$ & $0.15$ & $0.17$ & $0.97$ & & $+0.03$ & $0.42$ & $0.47$ & $0.96$ & & $-0.06$ & $0.57$ & $0.66$ & $0.97$ & & $+0.00$ & $0.24$ & $0.30$ & $0.98$ \\
$200$ & exp & $N(0,9)$ & $2$ & 15\% & $-0.40$ & $1.55$ & $1.37$ & $0.92$ & & $+0.08$ & $0.33$ & $0.30$ & $0.88$ & & $+0.25$ & $1.05$ & $0.96$ & $0.95$ & & $-0.42$ & $1.47$ & $1.30$ & $0.94$ & & $+0.06$ & $0.50$ & $0.46$ & $0.89$ \\
$200$ & exp & $N(0,9)$ & $2$ & 30\% & $-0.41$ & $1.67$ & $1.45$ & $0.91$ & & $+0.07$ & $0.34$ & $0.31$ & $0.86$ & & $+0.21$ & $1.04$ & $1.00$ & $0.95$ & & $-0.42$ & $1.54$ & $1.32$ & $0.92$ & & $+0.06$ & $0.54$ & $0.49$ & $0.88$ \\
$200$ & exp & $N(0,9)$ & $2$ & 50\% & $-0.61$ & $2.15$ & $1.58$ & $0.83$ & & $+0.11$ & $0.41$ & $0.31$ & $0.79$ & & $+0.33$ & $1.31$ & $1.12$ & $0.92$ & & $-0.62$ & $1.76$ & $1.38$ & $0.87$ & & $+0.08$ & $0.68$ & $0.52$ & $0.82$ \\
$500$ & exp & $N(0,9)$ & $2$ & 15\% & $-0.18$ & $0.93$ & $1.01$ & $0.97$ & & $+0.03$ & $0.21$ & $0.22$ & $0.96$ & & $+0.10$ & $0.55$ & $0.63$ & $0.97$ & & $-0.16$ & $0.82$ & $0.93$ & $0.98$ & & $+0.03$ & $0.31$ & $0.36$ & $0.96$ \\
$500$ & exp & $N(0,9)$ & $2$ & 30\% & $-0.27$ & $1.12$ & $1.10$ & $0.95$ & & $+0.05$ & $0.23$ & $0.24$ & $0.94$ & & $+0.11$ & $0.60$ & $0.68$ & $0.97$ & & $-0.22$ & $0.92$ & $0.98$ & $0.97$ & & $+0.05$ & $0.35$ & $0.39$ & $0.93$ \\
$500$ & exp & $N(0,9)$ & $2$ & 50\% & $-0.36$ & $1.49$ & $1.30$ & $0.91$ & & $+0.05$ & $0.30$ & $0.26$ & $0.90$ & & $+0.14$ & $0.81$ & $0.81$ & $0.95$ & & $-0.29$ & $1.15$ & $1.07$ & $0.94$ & & $+0.05$ & $0.48$ & $0.45$ & $0.89$ \\
$1000$ & exp & $N(0,9)$ & $2$ & 30\% & $-0.08$ & $0.65$ & $0.76$ & $0.97$ & & $+0.01$ & $0.15$ & $0.17$ & $0.96$ & & $+0.04$ & $0.40$ & $0.45$ & $0.97$ & & $-0.07$ & $0.58$ & $0.67$ & $0.97$ & & $+0.01$ & $0.24$ & $0.29$ & $0.96$ \\
$200$ & exp & min-EV & $2$ & 30\% & $-0.60$ & $1.91$ & $1.47$ & $0.86$ & & $+0.10$ & $0.38$ & $0.30$ & $0.81$ & & $+0.31$ & $1.15$ & $1.03$ & $0.94$ & & $-0.56$ & $1.61$ & $1.32$ & $0.90$ & & $+0.09$ & $0.60$ & $0.49$ & $0.83$ \\
$500$ & exp & min-EV & $2$ & 30\% & $-0.21$ & $1.12$ & $1.10$ & $0.96$ & & $+0.04$ & $0.23$ & $0.23$ & $0.93$ & & $+0.09$ & $0.62$ & $0.69$ & $0.97$ & & $-0.22$ & $0.95$ & $0.95$ & $0.96$ & & $+0.03$ & $0.37$ & $0.40$ & $0.93$ \\
$1000$ & exp & min-EV & $2$ & 30\% & $-0.06$ & $0.71$ & $0.79$ & $0.98$ & & $+0.01$ & $0.15$ & $0.17$ & $0.97$ & & $+0.04$ & $0.41$ & $0.46$ & $0.96$ & & $-0.06$ & $0.57$ & $0.66$ & $0.98$ & & $+0.00$ & $0.25$ & $0.30$ & $0.97$ \\
$200$ & logistic & min-EV & $1$ & 30\% & $-0.29$ & $1.53$ & $1.33$ & $0.91$ & & $+0.04$ & $0.34$ & $0.34$ & $0.92$ & & $+0.21$ & $0.92$ & $0.88$ & $0.95$ & & $-0.26$ & $1.30$ & $1.22$ & $0.94$ & & $+0.08$ & $0.80$ & $0.74$ & $0.89$ \\
$500$ & logistic & min-EV & $1$ & 30\% & $-0.06$ & $0.82$ & $0.89$ & $0.96$ & & $+0.00$ & $0.20$ & $0.21$ & $0.96$ & & $+0.05$ & $0.47$ & $0.55$ & $0.97$ & & $-0.09$ & $0.73$ & $0.81$ & $0.97$ & & $+0.02$ & $0.48$ & $0.54$ & $0.97$ \\
$1000$ & logistic & min-EV & $1$ & 30\% & $-0.01$ & $0.54$ & $0.60$ & $0.96$ & & $-0.00$ & $0.13$ & $0.14$ & $0.97$ & & $+0.02$ & $0.32$ & $0.36$ & $0.96$ & & $-0.01$ & $0.49$ & $0.54$ & $0.96$ & & $-0.01$ & $0.33$ & $0.37$ & $0.97$ \\
\hline
\end{tabular}}

\vspace{0.4em}
\begin{minipage}{\textwidth}\footnotesize\raggedright
Bias, empirical standard error (ESE), mean estimated standard error (ASE), and empirical $95\%$ coverage probability (ECP) for $(\beta_0,\log\beta_1,\alpha_1,\alpha_2)$ and the critical point $X_c$, by sample size $n$, transformation $G$, error distribution $\epsilon$, right-arm slope $\beta_1$, and censoring rate. Results are based on $1{,}000$ replications per scenario. ECP is the empirical $95\%$ coverage probability of the bootstrap-based Wald confidence interval; for $X_c$, the standard error is computed via the delta method using the bootstrap covariance of $\hat\btheta$. Scenarios S01--S23 span sample sizes $n\in\{200,500,1000\}$, transformation $G\in\{\text{logistic},\exp\}$, error distribution $\epsilon\in\{N(0,9),\text{min-EV}\}$, right-arm slopes $\beta_1\in\{1,2\}$, and censoring rates $\in\{15\%, 30\%, 50\%\}$.
\end{minipage}
\end{sidewaystable}
\clearpage
\begin{sidewaystable}[htbp]
\centering
\caption{Estimated model parameters and critical point with 95\% confidence intervals}
\label{tab:est_result1}

\renewcommand{\arraystretch}{1.2}
\setlength{\tabcolsep}{6pt}

\begin{tabular}{lcccc}
\hline
Group
& $\hat\balpha_2$ (95\% CI)
& $\hat\beta_0$ (95\% CI)
& $\hat\beta_1$ (95\% CI)
& Critical point (95\% CI) \\
\hline
Female, $<65$ 
& /& $3.26\;(2.60,\;4.15)$& $0.25\;(0.20,\;0.47)$& $22.39\;(21.71,\;23.02)$ \\

Female, $\geq 65$ 
& $11.10\;(8.77,\;13.45)$ 
& $12.55\;(10.25,\;14.63)$& $1.12\;(0.84,\;2.13)$& $24.32\;(22.66,\;25.72)$ \\

Male, $<65$ 
& $4.70\;(2.26,\;5.72)$ 
& $6.54\;(4.55,\;7.43)$& $0.48\;(0.36,\;0.52)$& $23.75\;(23.13,\;24.41)$ \\

Male, $\geq 65$ 
& $24.28\;(21.34,\;39.68)$ 
& $19.23\;(17.65,\;36.48)$& $0.63\;(0.59,\;1.17)$& $28.10\;(18.30,\;32.57)$ \\
\hline
\end{tabular}

\vspace{1cm}

\begin{minipage}{\textwidth}
\footnotesize
\raggedright
Notes: Parameter estimates are obtained from the proposed kernel-weighted Kaplan--Meier model with isotonic regression.
Confidence intervals are based on bootstrap resampling.
Female, $<65$ is the reference subgroup ($Z=(0,0)$), for which $\hat \alpha_2=0$ by construction. The displayed $\hat \alpha_2$ is a subgroup-specific scalar (denoted $\alpha_{2z}$ in Web Appendix B).
$\hat\balpha_1$ is fixed to 0 for identifiability (see Web Appendix B).

\end{minipage}

\end{sidewaystable}
\clearpage
\clearpage
\begin{figure}[!htbp]
    \centering
    \includegraphics[width=1\linewidth,height=0.55\textheight,keepaspectratio]{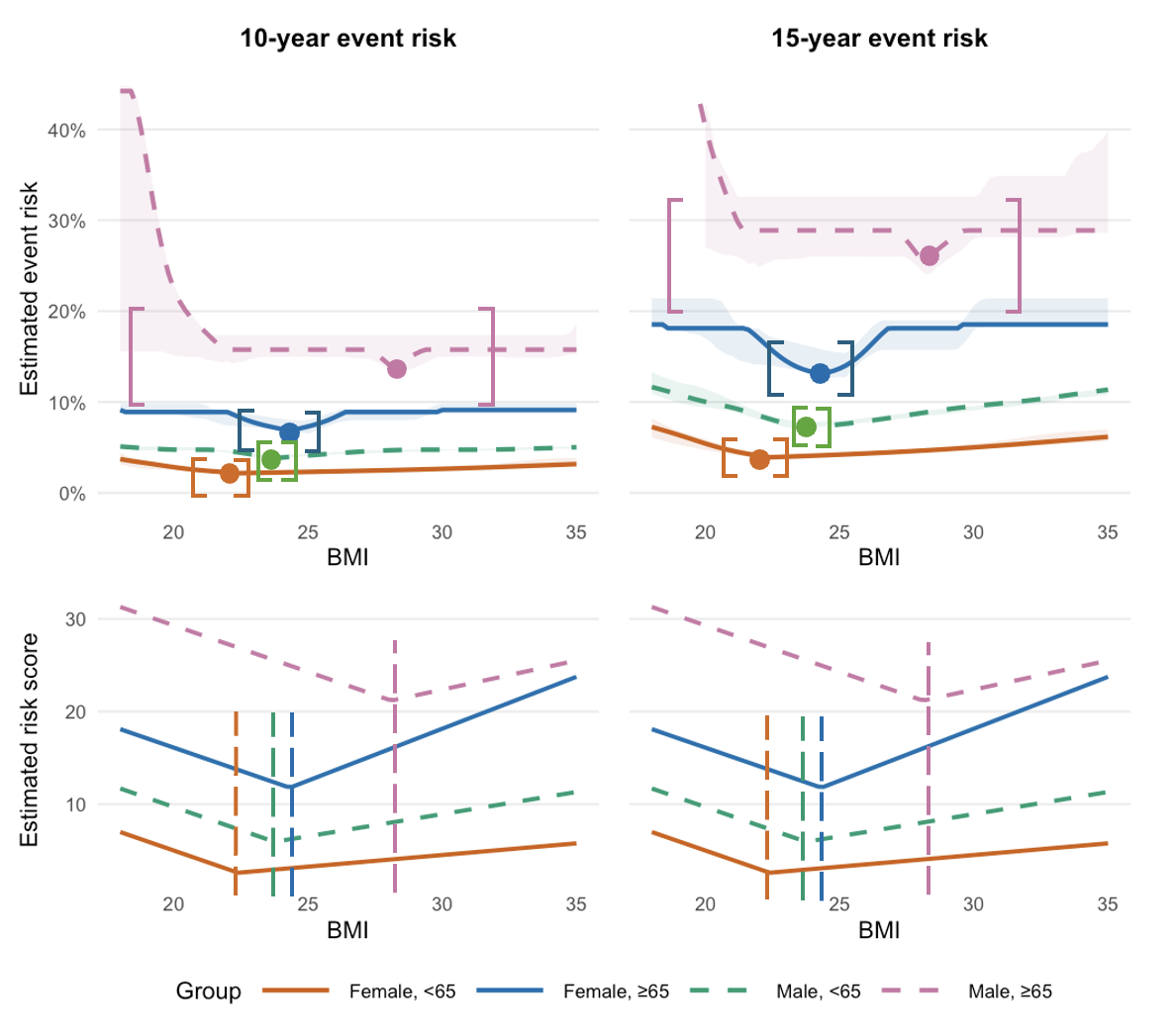}
    \caption{BMI--risk relationship and estimated critical points across sex and age subgroups. The upper panels show the estimated cumulative risk $1-S(t)$ as a function of body mass index (BMI) at 10 and 15 years of follow-up (left and right panels, respectively), stratified by sex and age ($<65$ years vs $\geq 65$). Risk curves were obtained using a kernel-weighted Kaplan--Meier estimator combined with isotonic regression (PAVA) to enforce monotonicity. Shaded ribbons represent 95\% bootstrap confidence intervals (200 resamples). Both panels share the same y-axis scale to facilitate direct comparison across time horizons. Colored rectangles indicate the 95\% confidence intervals of the estimated critical point for each subgroup. The lower panels display the estimated risk index $H(BMI)$ for each subgroup, with vertical dashed lines marking the estimated critical point. Together, the two rows illustrate how subgroup-specific critical points in the risk index translate into distinct BMI--risk profiles over different follow-up times.}
    \label{fig:fix_time}
\end{figure}
\clearpage
\clearpage
\begin{figure}[!htbp]
    \centering
    \includegraphics[width=1\linewidth]{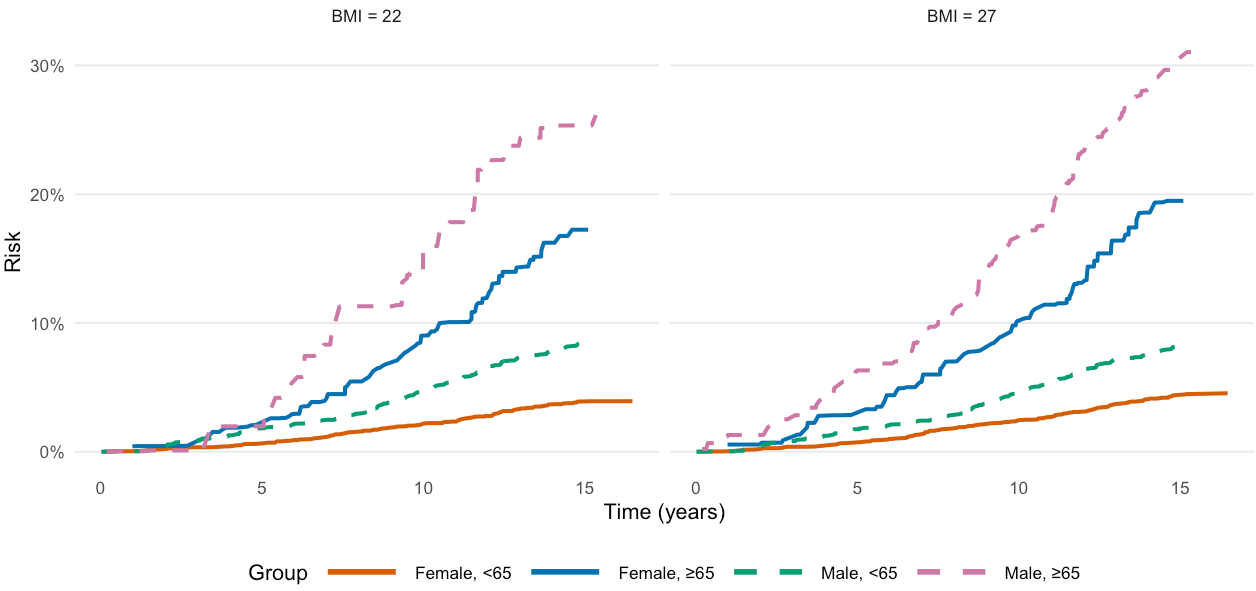}
    \caption{Estimated cumulative risk over time at fixed BMI levels for each subgroup. The $x$-axis represents follow-up time (years), and the $y$-axis represents cumulative risk. Each curve corresponds to a subgroup defined by sex and age ($<65$ vs $\geq 65$). Results are shown at BMI = 22 and BMI = 27. The monotonic increase in risk over time is consistent across all subgroups.}
    \label{fig:fix_bmi}
\end{figure}

\label{lastpage}

\clearpage
\renewcommand{\thesubsection}{A.\arabic{subsection}}
\setcounter{figure}{0}
\setcounter{table}{0}
\renewcommand{\thefigure}{S\arabic{figure}}
\renewcommand{\thetable}{S\arabic{table}}

\begin{center}
{\Large\textbf{Supplementary Material for ``Rank-Based Estimation of U-Shaped Biomarker Risk Curves and Critical Points for Time-to-Event Outcomes''}}\\[1em]
{Zhirui Fu, Mei-Cheng Wang, Yu Du, and Yuxin Zhu}
\end{center}

\section*{Web Appendix A: Proofs of Theorem \ref{thm-1}}

In this section, we state the assumptions and prove Theorem \ref{thm-1}. We first give assumptions and establish the consistency of the the maximum C-index estimator (MCE). We then state two lemmas that establish the $\sqrt{n}$-consistency and asymptotic normality of a general extremum estimator, 
together with a $U$-statistic decomposition and a uniform bound for degenerate $U$-processes; these are general results, due to \citet{sherman1993}, and are reproduced here with proofs for completeness. The remaining sections apply them to the MCE, verify the Euclidean property of the function class induced by $H$,
and give one-dimensional forms of the variance components $U$ and $V$.

\subsection{Assumptions}\label{sec:assump}

\noindent Assumptions under model (\ref{semi_model}) of the main text to establish the consistency of the MCE:

    \assum{as:G} $G$ is a strictly increasing function;
    \assum{as:eps} $\{\epsilon_i\}$ are iid random variables with a continuous distribution supported on $\mathbb{R}$;
    \assum{as:cens} $\{(x_i,\bz_i)\}$ are iid $(1+p)$-vector random variables, independent of $\{\epsilon_i\}$, with distribution function $F_w$. The censoring times satisfy $C_i\perp T_i\mid (x_i,\bz_i)$ and $\Pr (T_i\leq C_i\mid x_i,\bz_i)>0$ for almost every $(x,\bz)$ in the support of $F_w$; 
    \assum{as:dens} $x$ has a distribution with everywhere positive Lebesgue density, conditional on $z_i$;
    \assum{as:rank} $\mathbb{E}[(1,\bz_i^{\T})^{\T}(1,\bz_i^{\T})]$ is nonsingular;
    \assum{as:theta} $\Theta$ is a compact subset of $\mathbb{R}^{2+2p}$ containing $\btheta_0$, on which $\beta_1$ is bounded away from 0.

Assumptions (\ref{as:eps}) and (\ref{as:dens}) ensure that ties in the index occur with probability zero. Assumption (\ref{as:cens}) requires that censoring be conditionally independent of the failure time given the covariates and that the probability of censoring is not equal to one for any $(x,\mathbf z)$; it delivers the pairwise inequality of Web Appendix \ref{sec:consistency}. Assumption (\ref{as:dens}) provides a sufficient condition for point identification of both branches of $H$, and hence of the critical point. Assumption (\ref{as:rank}) rules out additive shifts of the index and guarantees that $\btheta_0$ is identified. Assumption (\ref{as:theta}) provides the compactness needed for uniform convergence, and $\beta_1>0$ fixes the orientation of the U shape. Assumptions (\ref{as:dens})–(\ref{as:theta}) guarantee that the objective function $C_n(\beta_0,\beta_1,\balpha_1,\balpha_2)$ is uniquely maximized at the truth.

\subsection{Proof of Consistency}\label{sec:consistency}

Denote $\bw_i=(x_i,\bz_i)$, with the support $S_w$, $H_i(\btheta)=H(x_i,\bz_i;\btheta)=\max (-x_i+\bz_i^{\T}\balpha_1,\beta_0+\beta_1 x_i+\bz_i^{\T}\balpha_2)$, $\btheta=(\beta_0,\beta_1,\balpha_1^{\T},\balpha_2^{\T})^{\T}$. 
$\btheta_0=(\beta_0^{(0)},\beta_1^{(0)},\balpha_1^{(0){\T}},\balpha_2^{(0)\T})^{\T}$ is the true value.
Under model (\ref{semi_model}) of the main text and assumptions (\ref{as:G}) - (\ref{as:theta}), a crucial step in showing consistency is identification. We first establish the following pairwise inequality.

\begin{lemma}\label{lem:pairwise}
    If $H_i(\btheta_0)>H_j(\btheta_0)$, then 
    $$
    P(Y_i<Y_j,\Delta_i=1\mid\bw_i,\bw_j)>P(Y_j<Y_i,\Delta_j=1\mid \bw_i,\bw_j).
    $$
\end{lemma}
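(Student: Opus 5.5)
The plan is to work conditionally on $(\bw_i,\bw_j)$ and reduce both probabilities to integrals against a common censoring weight. By Assumptions (\ref{as:G})--(\ref{as:cens}), given $(\bw_i,\bw_j)$ the four variables $T_i,T_j,C_i,C_j$ are mutually independent. Let $S_k(t)=P(T_k>t\mid\bw_k)=1-F_\epsilon\{G^{-1}(t)+H_k(\btheta_0)\}$ and $F_k=1-S_k$ for $k\in\{i,j\}$. Let $K_k(t)=P(C_k>t\mid \bw_k)$.

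\textbf{Step 1 (common weight).} Write $L(t)=K_i(t)K_j(t)$, which is nonincreasing and takes values in $[0,1]$. Conditioning on $T_i=t$ gives
\begin{align*}
A:=P(Y_i<Y_j,\Delta_i=1\mid\bw_i,\bw_j)&=\int L(t)\,S_j(t)\,dF_i(t),\\
B:=P(Y_j<Y_i,\Delta_j=1\mid\bw_i,\bw_j)&=\int L(t)\,S_i(t)\,dF_j(t),
\end{align*}
with $K$ taken left-continuous where needed; ties have probability zero by continuity of $\epsilon$.

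\textbf{Step 2 (integration by parts).} Put
\[
M(t)=\int_{(0,t]}\{S_j\,dF_i-S_i\,dF_j\}=P(T_i\le t,\,T_i<T_j)-P(T_j\le t,\,T_j<T_i).
\]
Then $A-B=\int L\,dM=L(\infty)M(\infty)+\int M\,d(-L)$. Here $M(\infty)=P(T_i<T_j)-P(T_j<T_i)$, which is strictly positive by the stochastic ordering already noted in Section~\ref{sec:estimation-finite} (following \cite{han1987non}). That ordering holds because $H_i(\btheta_0)>H_j(\btheta_0)$, $G$ is strictly increasing, and $\epsilon$ has full support. It therefore suffices to show two things:
\begin{enumerate}[label=(\roman*)]
\item $M(t)\ge 0$ for every $t$;
\item strict positivity survives the weighting by $L$.
\end{enumerate}
For (ii), $\Pr(T\le C\mid\bw)>0$ in Assumption (\ref{as:cens}) ensures $L$ is not identically zero on the region where $M>0$.

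\textbf{Step 3 (main obstacle: $M(t)\ge0$).} I would work on the transformed scale. Set $u=G^{-1}(t)$, $V_k=\epsilon_k-H_k(\btheta_0)$ and $a=H_i-H_j>0$, so that $V_i\overset{d}{=}V_j'-a$ where $V_j'$ is an independent copy of $V_j$. The decomposition to exploit is
\[
M(t)=F_i(t)-F_j(t)+P(T_i<T_j\le t)-P(T_j<T_i\le t).
\]
The first difference is nonnegative by the shift. For the second, the pair $(V_i,V_j)$ restricted to $\{V_i\le u,V_j\le u\}$ should still favour $V_i$ being the smaller, because $V_i$ is a downward shift of an i.i.d. copy. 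To make this precise, I would first couple $V_i=V_j'-a$ and compare the event $\{V_j'-a<V_j\le u\}$ with its mirror $\{V_j<V_j'-a\le u\}$ using exchangeability of $(V_j,V_j')$. I expect this comparison to be the delicate point: for an arbitrary error law it may require a monotone hazard ratio, that is, $S_i/S_j$ nonincreasing, which gives $S_j\,dF_i-S_i\,dF_j=-S_j^2\,d(S_i/S_j)\ge0$ pointwise.

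\textbf{Fallback.} If the direct argument fails, I would invoke one of two additional conditions, either of which closes the proof immediately via Step 2:
\begin{itemize}
\item censoring not depending on covariates, which still leaves a general nonincreasing $L$; in that case I would try to verify (i) for log-concave $F_\epsilon$;
\item the hazard-ordering condition itself, which is satisfied by the proportional hazards and proportional odds examples.
\end{itemize}
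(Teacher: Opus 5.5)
\paragraph{Where the proposal stands.} Your Steps 1--2 are sound and match the paper's reduction. Writing $A-B=L(\infty)M(\infty)+\int M\,d(-L)$ is the same as the paper's conditioning on $\min(C_i,C_j)=u$ and integrating $P(T_i<T_j,T_i<u)-P(T_j<T_i,T_j<u)$ against $dF_{\min(C_i,C_j)}$.

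\paragraph{The gap.} Step~3 is the whole content of the lemma, and you never establish $M(t)\ge 0$. Both fallbacks prove a different lemma, since hazard-rate ordering and log-concavity of $F_\epsilon$ are not among assumptions (\ref{as:G})--(\ref{as:cens}). Moreover, the decomposition you plan to exploit cannot work in general. Write $F,f$ for the cdf and density of $V_j$ and $u=G^{-1}(t)$. The shift structure gives $P(T_i<T_j\le t)-P(T_j<T_i\le t)=\int_{-\infty}^{u}\{F(v+a)f(v)-F(v)f(v+a)\}\,dv$. This integrand is nonnegative for all $v$ and $a>0$ exactly when the reversed hazard $f/F$ is nonincreasing, i.e.\ when $F$ is log-concave. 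For a non-log-concave law the term can be negative. For example, take $V_j\sim 0.1\,N(-10,1)+0.9\,N(10,1)$, small $a$, and $u=9$. To first order in $a$ the term equals $a\{2\int_{-\infty}^{9}f^2\,dv-F(9)f(9)\}\approx -0.011\,a<0$. So the heuristic that the region ``both below $u$'' still favours $V_i$ is false, and bounding the two pieces separately is a dead end.

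\paragraph{The missing idea.} Do not split $M$ at all. Since ties have probability zero, $M(t)=P\{\min(T_i,t)<\min(T_j,t)\}-P\{\min(T_j,t)<\min(T_i,t)\}$. This is exactly the uncensored comparison for the truncated times $T^{(t)}=\min(T,t)$ used in the paper. Your own coupling then gives the required inequality, provided you apply it to the whole events rather than to the sub-events where both variables lie below $u$. With $V_i=V_j'-a$ we have
$\{V_j'-a<V_j,\ V_j'-a<u\}\supseteq\{V_j'<V_j,\ V_j'<u\}$ and $\{V_j<V_j'-a,\ V_j<u\}\subseteq\{V_j<V_j',\ V_j<u\}$.
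The two outer events have equal probability by exchangeability of $(V_j,V_j')$. Hence $M\ge 0$ for every error law.

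Strictness also follows. The first inclusion misses the set $\{u\le V_j'<u+a,\ V_j>V_j'-a\}$, which has positive probability because $\epsilon$ has full support. So $M(t)>0$ for every $t$ in the range of $G$. With this, your Step 2 weighting argument, using assumption (\ref{as:cens}) to put positive weight where $M>0$, closes the proof under the stated assumptions. It also supplies the step the paper only asserts as ``the same argument as in the uncensored case.''
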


\begin{proof}
By definition, 
\begin{align*}
\begin{split}
        &P(Y_i<Y_j,\Delta_i=1\mid \bw_i, \bw_j)\\
=&P\{T_i<T_j,T_i<\min(C_i,C_j)\mid \bw_i, \bw_j\}\\
=&\int_0^\infty P\{T_i<T_j,T_i<u\mid \bw_i, \bw_j,\min(C_i,C_j)=u\}dF_{\min(C_i,C_j)}(u)\\
=&\int_0^\infty P\{T_i<T_j,T_i<u\mid \bw_i, \bw_j\}dF_{\min(C_i,C_j)}(u),
\end{split}
\end{align*}
the last equality by the conditional independence in assumption~\ref{as:cens}.
For a fixed $u$, define the truncated event time $T^{(u)}$ with conditional survival function 
$$
P(T^{(u)}>t\mid \bw)=
\begin{cases}
P(T>t\mid \bw), & t<u,\\
0, & t\geq u.
\end{cases}
$$
That is, $T^{(u)}$ has the same conditional distribution as $T$ on $[0,u)$, but has support bounded by $u$. By construction, for all $t$, $P(T_i^{(u)}>t\mid \bw_i)\leq P(T_j^{(u)}>t\mid \bw_j)$. Therefore, the truncated event time $T_i^{(u)}$ and $T_j^{(u)}$ satisfy the same stochastic ordering as the orginal event times.

Since $T_i^{(u)}$ and $T_j^{(u)}$ remain conditionally independent given $(\bw_i,\bw_j)$, the same argument as in the uncensored case yields $P(T_i^{(u)}<T_j^{(u)}\mid \bw_i,\bw_j)>P(T_j^{(u)}<T_i^{u}\mid \bw_i,\bw_j)$.
Moreover, ${T_i^{(u)}<T_j^{(u)}}={T_i<T_j,T_i<u}$, and ${T_j^{(u)}<T_i^{(u)}}={T_j<T_i,T_j<u}$. Hence, $P(T_i<T_j,T_i<u\mid \bw_i,\bw_j)>P(T_J<T_i,T_j<u\mid \bw_i,\bw_j)$. Integrating both sides with respect to $dF_{\min (C_i,C_j)}(u)$ gives $P(Y_i<Y_j,\Delta_i=1\mid \bw_i,\bw_j)>P(Y_j<Y_i,\Delta_j=1\mid \bw_i,\bw_j)$, which completes the proof.
\hfill$\square$
\end{proof}

Given $H_i(\btheta_0) > H_j(\btheta_0)$, Lemma \ref{lem:pairwise} implies $P(Y_i>Y_j,\Delta_j=1)<P(Y_i<Y_j,\Delta_i=1)$. Thus $\btheta_0$ maximizes
\begin{equation}
\Lambda(\btheta) = \E\Big[ P(Y_i < Y_j, \Delta_i = 1)\cdot
\ind\{H_i(\btheta) > H_j(\btheta)\}
+ P(Y_i > Y_j, \Delta_j = 1)\cdot
\ind\{H_i(\btheta) < H_j(\btheta)\}\Big]
\label{eq:C}
\end{equation}
for each set of $(i,j)$, and therefore maximizes $\E\{C_n(\btheta)\}$, which is the average of $\Lambda(\btheta)$ over the $n(n-1)$ ordered pairs. Here the expectation is taken over $(x_i,\bz_i)$ and $(x_j,\bz_j)$, and $H_i(\btheta)=H(x_i,\bz_i;\btheta)$. 

We now use the above to show that the objective function is uniquely maximized at the truth. The key step is the following identification result, which states that the ordering induced by $H(\cdot\,;\btheta)$ determines
$\btheta$.

\begin{lemma}\label{lem:ident}
    Under assumptions (\ref{as:dens})--(\ref{as:theta}), let $\btheta^{*} \in \Theta$ satisfy
\begin{equation}
\operatorname{sign}\big\{H(x_1,\bm{z}_1;\btheta^{*})
- H(x_2,\bm{z}_2;\btheta^{*})\big\}
= \operatorname{sign}\big\{H(x_1,\bm{z}_1;\btheta_0)
- H(x_2,\bm{z}_2;\btheta_0)\big\}
\label{eq:samesign}
\end{equation}
for almost every pair $(x_1,\bm{z}_1)$, $(x_2,\bm{z}_2)$.
Then $\btheta^{*} = \btheta_0$.
\end{lemma}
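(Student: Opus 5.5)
The plan is to show that agreement of the orderings forces the two index functions to share their level-set structure, and then to read the parameters off that structure. Throughout, write $H_0(x,\bz)=H(x,\bz;\btheta_0)$ and $H^*(x,\bz)=H(x,\bz;\btheta^*)$.

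\textbf{Step 1: from almost every pair to every pair.} Both $H_0$ and $H^*$ are continuous in $x$ for each fixed $\bz$. By assumption (\ref{as:dens}), conditional on $\bz$ the variable $x$ has positive density on all of $\R$. Hence an open set of pairs on which the signs in (\ref{eq:samesign}) disagreed would have positive probability. I therefore expect to upgrade (\ref{eq:samesign}) to the following statement, for all $x_1,x_2\in\R$ and $\bz$-almost every $\bz_1,\bz_2$:
$$H_0(x_1,\bz_1)>H_0(x_2,\bz_2)\ \Longrightarrow\ H^*(x_1,\bz_1)\ge H^*(x_2,\bz_2).$$
A limiting argument in $x$ then gives the equality version: $H_0(x_1,\bz_1)=H_0(x_2,\bz_2)$ implies $H^*(x_1,\bz_1)=H^*(x_2,\bz_2)$. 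Applying the same argument with the roles of $\btheta_0$ and $\btheta^*$ swapped, the two indices have identical level sets, so $H^*=\phi\circ H_0$ for some strictly increasing $\phi$.

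\textbf{Step 2: within a fixed $\bz$.} Fix $\bz$. As a function of $x$, $H_0$ is V-shaped, with slope $-1$ on the left of $X_c$ and slope $\beta_1>0$ on the right. Two points $x<x'$ on opposite branches have equal $H_0$ exactly when
$$-x+\bz^{\T}\balpha_1=\beta_0+\beta_1x'+\bz^{\T}\balpha_2,\quad\text{that is,}\quad x=\bz^{\T}(\balpha_1-\balpha_2)-\beta_0-\beta_1x'.$$
So the map pairing each right-branch point with its equal-level left-branch point is affine with slope $-\beta_1$ and intercept $\bz^{\T}(\balpha_1-\balpha_2)-\beta_0$. Since the level sets coincide by Step 1, the pairing maps under $\btheta_0$ and $\btheta^*$ must agree on a nondegenerate interval of $x'$. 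This should force $\beta_1^*=\beta_1$ and
$$\bz^{\T}(\balpha_1^*-\balpha_2^*)-\beta_0^*=\bz^{\T}(\balpha_1-\balpha_2)-\beta_0\quad\text{for a.e. }\bz.$$
In particular the critical points coincide. Assumption (\ref{as:theta}), which keeps $\beta_1^*$ away from zero, ensures that the V-shape of $H^*$ is nondegenerate, so the two-branch pairing really exists for $\btheta^*$ as well.

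\textbf{Step 3: across $\bz$ values.} Take $x_1,x_2$ both far to the left, so that each lies on the left branch for both parameter values. Then
$$\operatorname{sign}\{H_0(x_1,\bz_1)-H_0(x_2,\bz_2)\}=\operatorname{sign}\{x_2-x_1+(\bz_1-\bz_2)^{\T}\balpha_1\}.$$
The sign flips exactly at $x_1-x_2=(\bz_1-\bz_2)^{\T}\balpha_1$, and the same must hold with $\balpha_1^*$ in place of $\balpha_1$. Hence $(\bz_1-\bz_2)^{\T}(\balpha_1-\balpha_1^*)=0$ for a.e. pair, so $\bz^{\T}(\balpha_1-\balpha_1^*)=c$ a.s. for some constant $c$. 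This says $(1,\bz^{\T})(-c,(\balpha_1-\balpha_1^*)^{\T})^{\T}=0$ a.s., and the nonsingularity in assumption (\ref{as:rank}) gives $c=0$ and $\balpha_1^*=\balpha_1$. Substituting into the intercept identity from Step 2 yields
$$(\beta_0^*-\beta_0)+\bz^{\T}(\balpha_2^*-\balpha_2)=0\quad\text{a.s.},$$
and the same rank argument gives $\beta_0^*=\beta_0$ and $\balpha_2^*=\balpha_2$.

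\textbf{Main obstacle.} I expect the delicate part to be Step 1: passing rigorously from "almost every pair" to exact equality of level sets, and in particular handling the near-ties and the non-differentiable kink at the critical point. I would first try to handle this by exploiting continuity of $H$ in $x$ together with the everywhere-positive conditional density in assumption (\ref{as:dens}), arguing by contradiction on small open boxes of $(x_1,x_2)$ around any pair where the level-set equality fails.
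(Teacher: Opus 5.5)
Your proposal is correct. After the common starting point, it follows a genuinely different decomposition from the paper's.

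\textbf{How the two routes differ.} The paper also begins from a strictly increasing $\psi$ with $H(\cdot;\btheta^*)=\psi\{H(\cdot;\btheta_0)\}$. It simply asserts this, without the care you give it in Step~1. It then works on the link itself. On the left branch, $\psi(s)=s+\bz^{\T}(\balpha_1^*-\balpha_1^{(0)})$ for each $\bz$. Since $\psi$ does not depend on $\bz$, assumption~(\ref{as:rank}) gives $\balpha_1^*=\balpha_1^{(0)}$ and $\psi=\mathrm{id}$. Substituting this on the right branch equates two affine functions of $(x,\bz)$, and $\beta_1,\beta_0,\balpha_2$ are read off in one coefficient match.

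Your Step~3 is essentially that left-branch argument, rewritten as a sign-flip statement. Your Step~2, however, never identifies the link. Pairing the two branches at equal levels within a fiber is purely ordinal, and it delivers $\beta_1$ and the critical point $x_c(\bz)$ directly. You therefore use nothing about $\phi$ beyond injectivity.

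\textbf{What each buys.} The paper needs $\psi=\mathrm{id}$ at every right-branch value it substitutes. Its left-branch step actually supplies this only on an upper half-line of index values. The text's passage from ``identity on an interval'' to ``identity'' is not justified, so its right-branch comparison has to be restricted to large $x$. In exchange, once $\psi$ is pinned down, the paper's route is shorter and recovers all right-branch parameters at once.

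\textbf{On the obstacle you flagged.} The kink never needs to be touched. Every point used in Steps~2--3 can be taken with $x$ above $\max\{x_c(\bz;\btheta_0),x_c(\bz;\btheta^*)\}$ or far below the corresponding minimum. There both indices are strictly monotone in $x$, and your open-box perturbation argument is clean.

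The point that does need care is the diagonal. ``Almost every pair'' says nothing about pairs with $\bz_1=\bz_2$ when $\bz$ has a continuous component. So the fiberwise level-set equality in Step~2 cannot be read off (\ref{eq:samesign}) directly. Instead, route it through a comparison point $(x_2,\bz_2)$ at a generic $\bz_2$ that attains the same $H(\cdot;\btheta_0)$-level; such a point exists for all large levels. In other words, obtain it through the well-definedness of $\phi$. Again, only large levels are needed.
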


\begin{proof}
Condition \eqref{eq:samesign} states that $H(\cdot\,;\btheta^{*})$ and $H(\cdot\,;\btheta_0)$ induce the same ordering on the support of $(x_i,\bm{z}_i)$. Hence there exists a strictly increasing function $\psi$ with
\begin{equation}
H(x,\bm{z};\btheta^{*}) = \psi\big\{H(x,\bm{z};\btheta_0)\big\}
\qquad \text{for almost every } (x,\bm{z}).
\label{eq:psi}
\end{equation}

Write $x_c(\bm{z};\btheta) = (\beta_1+1)^{-1}\{\bm{z}^{\top}(\bm{\alpha}_1 - \bm{\alpha}_2) - \beta_0\}$ for the critical point. By assumption (\ref{as:theta}), $\Theta$ is compact and $\beta_1$ is bounded away from $0$, so $\{x_c(\bm{z};\btheta) : \btheta \in \Theta\}$ is bounded for each $\bm{z}$. By assumption (\ref{as:dens}), $x_i$ has an everywhere positive Lebesgue density conditional on $\bm{z}_i$. Consequently, for almost every $\bz$, each of the sets
$$
\mathcal{I}_{-}(\bm{z}) = \big\{x : x <\min\big(x_c(\bm{z};\btheta_0),x_c(\bm{z};\btheta^{*})\big)\big\},
\mathcal{I}_{+}(\bm{z}) = \big\{x : x >\max\big(x_c(\bm{z};\btheta_0), x_c(\bm{z};\btheta^{*})\big)\big\}
$$
contains an interval of positive conditional probability.

\emph{The first branch.} For $x \in \mathcal{I}_{-}(\bm{z})$, the maxima defining $H(x,\bm{z};\btheta_0)$ and $H(x,\bm{z};\btheta^{*})$ are both attained by their first arguments, so \eqref{eq:psi} becomes
$
\psi\big(-x + \bm{z}^{\top}\bm{\alpha}_1^{(0)}\big) = -x + \bm{z}^{\top}\bm{\alpha}_1^{*}.
$
Writing $s = -x + \bm{z}^{\top}\bm{\alpha}_1^{(0)}$, which ranges over an interval as $x$ does, gives $\psi(s) = s + \bm{z}^{\top}(\bm{\alpha}_1^{*} - \bm{\alpha}_1^{(0)})$. Since $\psi$ does not depend on $\bm{z}$, there is a constant $c$ such that $\bm{z}^{\top}(\bm{\alpha}_1^{*} - \bm{\alpha}_1^{(0)}) = c$ for almost every $\bm{z}$, that is, $c \cdot 1 - \bm{z}^{\top}(\bm{\alpha}_1^{*} - \bm{\alpha}_1^{(0)}) = 0$ almost surely. By assumption (\ref{as:rank}), $E[(1,\bm{z}^{\top})^{\top}(1,\bm{z}^{\top})]$ is nonsingular, so $\bm{\alpha}_1^{*} = \bm{\alpha}_1^{(0)}$ and $c = 0$. Thus $\psi$ is the identity on an interval, and since $\psi$ is strictly increasing and satisfies \eqref{eq:psi} throughout, $\psi$ is the identity.

\emph{The second branch.} For $x \in \mathcal{I}_{+}(\bm{z})$, both maxima are attained by their second arguments, and $\psi = \mathrm{id}$ reduces \eqref{eq:psi} to
$
\beta_0^{*} + \beta_1^{*} x + \bm{z}^{\top}\bm{\alpha}_2^{*}
= \beta_0^{(0)} + \beta_1^{(0)} x + \bm{z}^{\top}\bm{\alpha}_2^{(0)}
$
for all $x$ in an interval and almost every $\bm{z}$. Matching the coefficients of $x$ gives $\beta_1^{*} = \beta_1^{(0)}$, and assumption (\ref{as:rank}) applied to the remaining terms gives $\beta_0^{*} = \beta_0^{(0)}$ and $\bm{\alpha}_2^{*} = \bm{\alpha}_2^{(0)}$. Hence $\btheta^{*} = \btheta_0$. \hfill$\square$
\end{proof}

Now let $\btheta^{*} \in \Theta$ with $\btheta^{*} \neq \btheta_0$, and write $\Delta H(\btheta) = H_i(\btheta) - H_j(\btheta)$. Suppressing the conditioning on $(x_i,\bm{z}_i)$ and $(x_j,\bm{z}_j)$, \eqref{eq:C} gives
\begin{align}
\begin{split}
&\Lambda(\btheta_0) - \Lambda(\btheta^{*})\\
= \E&\Big[\big\{P(Y_i < Y_j, \Delta_i = 1) - P(Y_j < Y_i, \Delta_j = 1)\big\}
\cdot \big[\ind\{\Delta H(\btheta_0) > 0\}
- \ind\{\Delta H(\btheta^{*}) > 0\}\big]\Big].
\label{eq:Cdiff}
\end{split}
\end{align}
By Lemma \ref{lem:pairwise}, the sign of $P(Y_i < Y_j, \Delta_i = 1) - P(Y_j < Y_i, \Delta_j = 1)$ agrees with the sign of $\Delta H(\btheta_0)$, which is nonzero almost surely by assumptions (\ref{as:eps}) and (\ref{as:dens}). Hence the integrand in \eqref{eq:Cdiff} equals
$
\big|P(Y_i < Y_j, \Delta_i = 1) - P(Y_j < Y_i, \Delta_j = 1)\big|
\cdot \ind\big\{\operatorname{sign}\Delta H(\btheta_0)
\neq \operatorname{sign}\Delta H(\btheta^{*})\big\},
$
which is nonnegative, and strictly positive on the event that $\btheta_0$ and $\btheta^{*}$ order a pair of observations differently. By Lemma \ref{lem:ident}, since $\btheta^{*} \neq \btheta_0$, condition \eqref{eq:samesign} must fail, so this event has positive probability. Therefore $\Lambda(\btheta_0) > \Lambda(\btheta^{*})$ for every $\btheta^{*} \neq \btheta_0$, and since $\E\{C_n(\btheta)\}$ is a positive multiple of $\Lambda(\btheta)$, the objective function is uniquely maximized at the truth.

Next, we establish uniform convergence and continuity. The kernel of the U-statistic $C_n(\btheta)$ is bounded by $1$, and the class of kernels indexed by $\btheta \in \Theta$ is shown in Web Appendix \ref{sec:euclidean} to be Euclidean for the constant envelope $1$. Together with the compactness of $\Theta$ in assumption (\ref{as:theta}), Corollary 7 of \cite{sherman1993} gives
$
\sup_{\btheta \in \Theta}
\big|C_n(\btheta) - \Lambda(\btheta)\big| \xrightarrow{P} 0 .
$

We next show that $\Lambda(\btheta)$ is continuous on $\Theta$. Fix $\btheta \in \Theta$ and condition on $\bm{z}_i$, $x_j$ and $\bm{z}_j$. As a function of $x_i$, $H(x_i,\bm{z}_i;\btheta)$ is piecewise linear with slopes $-1$ and $\beta_1$, both of which are bounded away from zero by assumption (\ref{as:theta}); hence it is strictly monotone on each branch, and the equation $H(x_i,\bm{z}_i;\btheta) = H(x_j,\bm{z}_j;\btheta)$ has at most two solutions in $x_i$. Since $x_i$ has an everywhere positive Lebesgue density conditional on $\bm{z}_i$ by assumption (\ref{as:dens}), it follows that $\Pr\{\Delta H(\btheta) = 0\} = 0$ for every $\btheta \in \Theta$. The map $\btheta \mapsto \Delta H(\btheta)$ is continuous, so on the event $\{\Delta H(\btheta) \neq 0\}$ the indicators in \eqref{eq:C} are locally constant in $\btheta$, and the integrand of \eqref{eq:C} is therefore continuous at $\btheta$ almost surely. As it is bounded by $1$, the dominated convergence theorem gives the continuity of $\Lambda(\btheta)$.

Since $\widehat{\btheta}_n$ maximizes $C_n(\btheta)$ over the compact set $\Theta$, $\Lambda(\btheta)$ is continuous and uniquely maximized at $\btheta_0$, and $C_n$ converges uniformly to $\Lambda$ on $\Theta$, standard arguments for extremum estimators (Theorem 2.1 of \cite{newey1994large}) give $\widehat{\btheta}_n \xrightarrow{P} \btheta_0$. This establishes consistency.

\subsection{Establishing the asymptotic normality of a general estimator}\label{sec:general}

In this section, we present two lemmas that will be used to establish the $\sqrt{n}$-consistency and asymptotic normality of the MCE. Both are stated for a general extremum estimator and require no assumptions on the model; the conditions needed to apply them to $C_n(\btheta)$ are verified in Web Appendix \ref{sec:normality}. The lemmas are due to \cite{sherman1993} (Theorems 1 and 2); we reproduce them here, with proofs, for completeness.
Let $\Theta$ be a subset of $\mathbb{R}^m$, and $\btheta_0$ an element of $\Theta$ and a parameter of interest. Suppose $\btheta_0$ maximizes a function $\Gamma(\btheta)$ defined on $\Theta$. Suppose further that a sample analogue, $\Gamma_n(\btheta)$, is maximized at a point $\btheta_n$ that converges in probability to $\btheta_0$.

A brief word on notation. We will be needing uniform bounds on functions of $\btheta$ within shrinking neighborhoods of $\btheta_0$. A convenient notation will be ``$H_n(\btheta)=$ $o_p(1 / n)$ uniformly over $o_p(1)$ neighborhoods of $\btheta_0$." This means that for each sequence of random variables $\left\{r_n\right\}$ of order $o_p(1)$ there exists a sequence of random variables $\left\{b_n\right\}$ of order $o_p(1)$ such that
$$
\sup _{\left|\btheta-\btheta_0\right| \leq r_n}\left|H_n(\btheta)\right| \leq b_n / n
$$
Also, for simplicity, we will assume that $\btheta_0$ is the zero vector (denoted $\mathbf{0}$) in $\mathbb{R}^m$, and that $\Gamma_n(\btheta_0)=\Gamma(\btheta_0)=0$. This entails no loss of generality, since one may always replace $\Gamma_n(\btheta)$ by $\Gamma_n(\btheta_0+\btheta)-\Gamma_n(\btheta_0)$ and $\Gamma(\btheta)$ by $\Gamma(\btheta_0+\btheta)-\Gamma(\btheta_0)$, provided $\btheta_0+\btheta\in\Theta$.

\begin{lemma}\label{lem:rootn}
    Let $\btheta_n$ be a maximizer of $\Gamma_n(\btheta)$, and $\mathbf{0}$ a maximizer of $\Gamma(\btheta)$. Suppose $\btheta_n$ converges in probability to $\mathbf{0}$, and also that
    
(i) there exists a neighborhood $\mathscr{N}$ of $\mathbf{0}$ and a constant $\kappa>0$ for which
$
\Gamma(\btheta) \leq-\kappa|\btheta|^2
$
for all $\btheta$ in $\mathscr{N}$;

(ii) uniformly over $o_p(1)$ neighborhoods of $\mathbf{0}$,
$$
\Gamma_n(\btheta)=\Gamma(\btheta)+O_p(|\btheta| / \sqrt{n})+o_p\left(|\btheta|^2\right)+O_p(1 / n)
$$
Then
$
\left|\btheta_n\right|=O_p(1 / \sqrt{n}).
$
\end{lemma}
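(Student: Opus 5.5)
The plan is Sherman's (1993, Theorem 1) argument: combine the basic inequality $\Gamma_n(\btheta_n)\ge\Gamma_n(\mathbf{0})=0$ with the quadratic upper bound on $\Gamma$ from (i). Since $\btheta_n\xrightarrow{P}\mathbf{0}$, there is a sequence $r_n=o_p(1)$ with $|\btheta_n|\le r_n$, and with probability tending to one $\btheta_n\in\mathscr{N}$. Condition (ii) then applies along $\btheta=\btheta_n$. Together with (i), it gives, with probability tending to one,
\[
0\le \Gamma_n(\btheta_n)\le -\kappa|\btheta_n|^2 + O_p(1)\,|\btheta_n|/\sqrt{n} + o_p(1)\,|\btheta_n|^2 + O_p(1/n).
\]

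The key step is handling the three remainder terms. The $o_p(|\btheta|^2)$ term is dominated by the curvature term: because (ii) is uniform over $o_p(1)$ neighborhoods, the coefficient $b_n$ multiplying $|\btheta_n|^2$ is $o_p(1)$, so $b_n\le\kappa/2$ with probability tending to one. This leaves
\[
\tfrac{\kappa}{2}|\btheta_n|^2 \le A_n |\btheta_n|/\sqrt{n} + B_n/n,
\]
where $A_n,B_n=O_p(1)$. Put $u_n=\sqrt{n}\,|\btheta_n|$ and multiply through by $n$. We get $\tfrac{\kappa}{2}u_n^2\le A_n u_n + B_n$, a quadratic inequality in $u_n$. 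Solving it, or using $A_nu_n\le \tfrac{\kappa}{4}u_n^2+A_n^2/\kappa$, gives $u_n^2\le 4A_n^2/\kappa^2+4B_n/\kappa$. The right-hand side is $O_p(1)$, so $|\btheta_n|=O_p(n^{-1/2})$.

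I expect the main obstacle to be bookkeeping rather than substance. One must check that the random, data-dependent $\btheta_n$ may be plugged into the uniform statement (ii), which is exactly why (ii) is phrased over $o_p(1)$ neighborhoods. One must also intersect events of probability tending to one: $\btheta_n\in\mathscr{N}$, $b_n\le\kappa/2$, and bounds $A_n\le M$, $B_n\le M$ for large $M$. To finish, fix $\varepsilon>0$ and choose $M$ so that $P(A_n>M)+P(B_n>M)<\varepsilon$ for all large $n$. Then $P(u_n>C)<2\varepsilon$ for a constant $C$ depending on $M$ and $\kappa$.
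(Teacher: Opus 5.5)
Your proposal is correct and follows essentially the same argument as the paper, which reproduces Sherman's Theorem 1: start from the basic inequality $0\le\Gamma_n(\btheta_n)$, absorb the $o_p(|\btheta_n|^2)$ term into $\tfrac{\kappa}{2}|\btheta_n|^2$, and resolve the resulting quadratic inequality. The only cosmetic difference is that you rescale by $\sqrt{n}$ and use the AM--GM bound $A_n u_n\le\tfrac{\kappa}{4}u_n^2+A_n^2/\kappa$, where the paper instead completes the square in $|\btheta_n|$.
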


\begin{proof}
Since $\btheta_n$ maximizes $\Gamma_n(\btheta)$,
\begin{align}
    0 \leq \Gamma_n\left(\btheta_n\right).\label{eq:maxn}
\end{align}
Since $\btheta_n$ is consistent, it lies within at least one of the sequences of neighborhoods described in (i) and (ii) with probability tending to one as $n$ tends to infinity. When this happens, (i) and (ii) hold for $\btheta=\btheta_n$. Deduce from (\ref{eq:maxn}), (ii), and (i) that
$$
0 \leq-\kappa\left|\btheta_n\right|^2+o_p\left(\left|\btheta_n\right|^2\right)+O_p\left(\left|\btheta_n\right| / \sqrt{n}\right)+O_p(1/n) .
$$
With probability tending to one, the $o_p\left(\left|\btheta_n\right|^2\right)$ term is bounded in absolute value by $\frac{1}{2} \kappa\left|\btheta_n\right|^2$. 
Write $\xi_n\left|\btheta_n\right|$ for the $O_p\left(\left|\btheta_n\right| / \sqrt{n}\right)$ term, where $\left\{\xi_n\right\}$ is a sequence of random variables of order $O_p(1 / \sqrt{n})$. Then, absorbing all things that happen with probability tending to zero into the $O_p(1 / n)$ term, we get
$
2^{-1} \kappa\left|\btheta_n\right|^2-\xi_n\left|\btheta_n\right| \leq O_p(1 / n).
$
Complete the square in $\left|\btheta_n\right|$, to rewrite this as
$
2^{-1} \kappa\left(\left|\btheta_n\right|-\xi_n / \kappa\right)^2 \leq O_p(1 / n)+2^{-1} \xi_n^2 / \kappa=O_p(1 / n) .
$
Take square roots, then rearrange to get
$
\left|\btheta_n\right| \leq \xi_n / \kappa+O_p(1 / \sqrt{n})=O_p(1 / \sqrt{n}).
$
Once $\sqrt{n}$-consistency of $\btheta_n$ is established, we can prove asymptotic normality provided there exist very good quadratic approximations to $\Gamma_n(\btheta)$ within $O_p(1 / \sqrt{n})$ neighborhoods of $\mathbf{0}$. In the following lemma, the symbol $\Rightarrow$ denotes convergence in distribution. \hfill$\square$
\end{proof}

\begin{lemma}\label{lem:normal}
    Suppose $\btheta_n$ is $\sqrt{n}$-consistent for $\mathbf{0}$, an interior point of $\Theta$. Suppose also that uniformly over $O_p(1 / \sqrt{n})$ neighborhoods of $\mathbf{0}$,
\begin{align}
    \quad \Gamma_n(\btheta)=\frac{1}{2} \btheta^{\T} \bV \btheta+\frac{1}{\sqrt{n}} \btheta^{\T} \bW_n+o_p(1/n)\label{eq:quad}
\end{align}
where $\bV$ is a symmetric, negative definite matrix, and $\bW_n$ converges in distribution to a $N(\mathbf{0}, \bU)$ random vector. Then
$$
\sqrt{n} \btheta_n \Rightarrow N\left(\mathbf{0}, \bV^{-1} \bU \bV^{-1}\right) .
$$
\end{lemma}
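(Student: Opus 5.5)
The plan is the standard argmax argument of \cite{sherman1993} (Theorem 2). Quantities are centered as in the preceding notation, so $\btheta_0=\mathbf{0}$ and $\Gamma_n(\mathbf{0})=0$.

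\emph{Step 1: quadratic criterion.} Define the random quadratic
$$
Q_n(\btheta)=\tfrac12\btheta^{\T}\bV\btheta+n^{-1/2}\btheta^{\T}\bW_n .
$$
Since $\bV$ is negative definite, $Q_n$ has the unique maximizer $\bt_n=-n^{-1/2}\bV^{-1}\bW_n$, and $Q_n(\bt_n)=-\tfrac12 n^{-1}\bW_n^{\T}\bV^{-1}\bW_n$. Because $\bW_n\Rightarrow N(\mathbf{0},\bU)$, we have $\bW_n=O_p(1)$, so $\bt_n=O_p(n^{-1/2})$. Since $\mathbf{0}$ is interior, $\bt_n\in\Theta$ with probability tending to one.

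\emph{Step 2: compare the two maximizers.} Both $\btheta_n$ (by hypothesis) and $\bt_n$ (by Step 1) lie in $O_p(n^{-1/2})$ neighborhoods of $\mathbf{0}$, so the expansion (\ref{eq:quad}) applies at each. Because $\btheta_n$ maximizes $\Gamma_n$,
$$
Q_n(\btheta_n)+o_p(1/n)=\Gamma_n(\btheta_n)\ge\Gamma_n(\bt_n)=Q_n(\bt_n)+o_p(1/n).
$$
Hence $Q_n(\btheta_n)-Q_n(\bt_n)\ge -o_p(1/n)$.

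\emph{Step 3: exact quadratic identity.} Expanding $Q_n$ around its maximizer, and using that the linear term vanishes there, gives
$$
Q_n(\btheta)-Q_n(\bt_n)=\tfrac12(\btheta-\bt_n)^{\T}\bV(\btheta-\bt_n).
$$
Let $\lambda>0$ be the smallest eigenvalue of $-\bV$. Then
$$
\tfrac{\lambda}{2}\,|\btheta_n-\bt_n|^2\le -\tfrac12(\btheta_n-\bt_n)^{\T}\bV(\btheta_n-\bt_n)=Q_n(\bt_n)-Q_n(\btheta_n)\le o_p(1/n).
$$
Therefore $\sqrt n(\btheta_n-\bt_n)=o_p(1)$.

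\emph{Step 4: conclude.} We have $\sqrt n\,\bt_n=-\bV^{-1}\bW_n\Rightarrow N(\mathbf{0},\bV^{-1}\bU\bV^{-1})$ by the continuous mapping theorem, using the symmetry of $\bV$. Slutsky's lemma then gives the same limit for $\sqrt n\,\btheta_n$.

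The main point of care is Step 2. One must check that the $o_p(1/n)$ remainder in (\ref{eq:quad}) holds uniformly, so that it can be evaluated at the random points $\btheta_n$ and $\bt_n$ simultaneously. This works because the uniformity is over all $O_p(n^{-1/2})$ neighborhoods, and taking $r_n=\max(|\btheta_n|,|\bt_n|)=O_p(n^{-1/2})$ covers both points at once. One must also handle events of vanishing probability, namely $\bt_n\notin\Theta$, by absorbing them into the $o_p$ terms.
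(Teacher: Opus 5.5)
Your proposal is correct and follows essentially the same route as the paper's proof, which is Sherman's Theorem 2: compare $\Gamma_n$ at $\btheta_n$ and at the maximizer $-n^{-1/2}\bV^{-1}\bW_n$ of the quadratic approximation, apply the expansion at both points, and use the negative definiteness of $\bV$ to get $\sqrt{n}\,\btheta_n = -\bV^{-1}\bW_n + o_p(1)$. Your exact completing-the-square identity, the eigenvalue bound, and the explicit check that one $O_p(n^{-1/2})$ neighborhood covers both random points just spell out what the paper compresses into ``consolidate terms.''
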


\begin{proof}
Write $\bt_n$ for $\sqrt{n} \btheta_n$ and $\bt_n^*$ for $-\bV^{-1} \bW_n$. Notice that $\bt_n^* / \sqrt{n}$ maximizes the quadratic approximation to $\Gamma_n(\btheta)$ given in (\ref{eq:quad}). Also, $\bt_n^*$ converges in distribution to a $N\left(\mathbf{0}, \bV^{-1} \bU \bV^{-1}\right)$ random variable. We will show that $\bt_n=\bt_n^*+o_p(1)$.

Because $\mathbf{0}$ is an interior point of $\Theta$, the point $\bt_n^* / \sqrt{n}$ lies in $\Theta$ with probability tending to one. When this happens, by definition of $\bt_n$,
$
\Gamma_n\left(\bt_n^* / \sqrt{n}\right) \leq \Gamma_n\left(\bt_n / \sqrt{n}\right) .
$
Apply (\ref{eq:quad}) twice in the last expression, then multiply through by $n$, consolidate terms, and use the fact that $\bV$ is negative definite to get
$
0 \leq-\frac{1}{2}\left(\bt_n-\bt_n^*\right)^{\T} \bV\left(\bt_n-\bt_n^*\right) \leq o_p(1)
$
The $o_p(1)$ term can be assumed to absorb the bad cases where $\bt_n^* / \sqrt{n}$ does not lie in $\Theta$. The last inequality is true without restriction, and implies that $\bt_n=\bt_n^*+o_p(1)$. \hfill$\square$
\end{proof}

The conditions of the lemmas do not require that $\btheta_n$ be a zero of the gradient of $\Gamma_n(\btheta)$. Nor do they require that $\Gamma_n(\btheta)$ be a continuous function of $\btheta$. Thus, this approach provides a framework within which the asymptotic distribution of the MCE can be established.

\subsection{U-statistic decomposition}\label{sec:ustat}
The next part presents a $U$-statistic decomposition and a uniform bound that are used in tandem with the general method of the last section to establish the asymptotic distribution of the MCE.

Let $\bW_1, \ldots, \bW_n$ be i.i.d. random vectors with distribution $P$ on a set $S$. Let $\Theta$ be a subset of $\mathbb{R}^m$, and for each $\btheta$ in $\Theta$ suppose $f(\cdot, \cdot, \btheta)$ is a real-valued function on the product space $S \otimes S$. Define
$$
U_n f(\cdot, \cdot, \btheta)=\frac{1}{n(n-1)} \sum_{i \neq j} f\left(\bW_i, \bW_j, \btheta\right) .
$$
For each $\btheta$ in $\Theta, U_n f(\cdot, \cdot, \btheta)$ is a $U$-statistic of order two. (See Chapter 5 of \cite{serfling1980} for more on $U$-statistics.) The collection $\left\{U_n f(\cdot, \cdot, \btheta): \btheta \in \Theta\right\}$ is called a $U$-process of order two.

The following decomposition is due to \cite{sherman1993}(Section 3); we restate it in the notation of this paper. By analogy with the empirical measure $P_n$ that places mass $1 / n$ on each $\bW_i, U_n$ can be viewed as a random measure putting mass $1 /[n(n-1)]$ on each ordered pair $\left(\bW_i, \bW_j\right)$. Let $Q$ denote the product measure $P \otimes P$. Then
\begin{align}
    U_n f(\cdot, \cdot, \btheta)=Q f(\cdot, \cdot, \btheta)+P_n g(\cdot, \btheta)+U_n h(\cdot, \cdot, \btheta)\label{eq:decomp}
\end{align}

where, for each $u, v$ in $S$ and each $\theta$ in $\Theta$,
$
g(u, \btheta)=\operatorname{Pf}(u, \cdot, \btheta)+P f(\cdot, u, \btheta)-2 Q f(\cdot, \cdot, \btheta)
$
and
$
h(u, v, \btheta)=f(u, v, \btheta)-P f(u, \cdot, \btheta)-P f(\cdot, v, \btheta)+Q f(\cdot, \cdot, \btheta) .
$
Linear functional notation is used for expectations. $Q f(\cdot, \cdot, \btheta)$ denotes the unconditional expectation of $f(u, v, \btheta)$, while $P f(u, \cdot, \btheta)$ denotes the conditional expectation of $f(u, v, \btheta)$ given its first argument, and $P f(\cdot, v, \btheta)$ the conditional expectation of $f(u, v, \btheta)$ given its second argument.

Notice that for each $\btheta$ in $\Theta, P_n g(\cdot, \btheta)$ is an average of zero-mean, iid random variables. The collection $\left\{P_n g(\cdot, \btheta): \btheta \in \Theta\right\}$ is called a zero-mean empirical process. Also note that
\begin{align}
    Ph(u, \cdot, \btheta) \equiv Ph(\cdot, v, \btheta) \equiv 0.\label{eq:degen}
\end{align}

Because of \eqref{eq:degen}, the function $h(u, v, \btheta)$ is said to be $P$-degenerate on $S \otimes S$ and $U_n h(\cdot, \cdot, \btheta)$ is called a degenerate $U$-statistic of order two. The collection $\{h(\cdot, \cdot, \btheta): \btheta \in \Theta\}$ is said to be a $P$-degenerate class of functions on $S \otimes S$ and $\left\{U_n h(\cdot, \cdot, \btheta): \btheta \in \Theta\right\}$ is called a degenerate $U$-process of order two.

\subsection{Establishing the asymptotic normality of the MCE}\label{sec:normality}
In this section, the MCE is shown to be $\sqrt{n}$-consistent and asymptotically normal. 
Following \citet{sherman1993}, the decomposition in \eqref{eq:decomp} is applied to write $\Gamma_n(\btheta)$ as a sum of its expected value, plus a smoothly parameterized, zero-mean empirical process, plus a degenerate $U$-process of order two. The result is obtained by handling the first two terms using standard Taylor expansion arguments, and then showing that the degenerate term has order $o_p(1 / n)$ uniformly over $o_p$ (1) neighborhoods of the parameter of interest. The following lemma is used to establish the uniformity result.

\begin{lemma}\label{lem:degen}
    Let $\mathscr{F}=\{f(\cdot, \cdot, \btheta): \btheta \in \Theta\}$ be a class of $P$-degenerate functions on $S \otimes S$. Let $Q$ denote the product measure $P \otimes P$. Suppose there exists a point $\btheta_0$ in $\Theta$ for which $f\left(\cdot, \cdot, \btheta_0\right) \equiv 0$. If
    
(i) $\mathscr{F}$ is Euclidean for a constant envelope,

(ii) $Q f(\cdot, \cdot, \btheta)^2 \rightarrow 0$ as $\btheta \rightarrow \btheta_0$, 

then
$
U_n f(\cdot, \cdot, \btheta)=o_p(1 / n)
$
uniformly over $o_p(1)$ neighborhoods of $\btheta_0$.
\end{lemma}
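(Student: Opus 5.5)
This is Sherman's (1994) Main Corollary 8, which rests on a maximal inequality for degenerate $U$-processes (Sherman 1994, Theorem 3, itself from Nolan and Pollard 1987). I would follow that route.

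\emph{Step 1: maximal inequality.} For a $P$-degenerate class $\mathscr{F}$ that is Euclidean for a constant envelope $F$, I would establish
\[
\E\sup_{\mathscr{F}} \big|n U_n f\big| \le C\, \E\Big[\,\big(\sup_{\mathscr{F}} T_n f^2\big)^{1/2}\, J\big(\cdot\big)\Big],
\]
where $T_n$ is the symmetrized empirical measure on pairs. The ingredients are:
\begin{enumerate}
\item Decoupling, which replaces $(\bW_i,\bW_j)$ by independent copies $(\bW_i,\bW'_j)$.
\item Randomization with products of Rademacher signs $\sigma_i\sigma_j$.
\item Conditionally on the data, a chaining bound for the Rademacher chaos of order two. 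This chaos has sub-exponential (order-two) tails, so the entropy integral is of the form $\int_0^{\delta} \log N(x)\,dx$.
\item The Euclidean property, $N(x)\le A x^{-V}$, which makes $J(\delta)=\int_0^\delta \log(A x^{-V})dx$ finite and shows that $J(\delta)\to 0$ as $\delta\to 0$.
\end{enumerate}
Boundedness of the envelope keeps all moments finite.

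\emph{Step 2: localize.} Fix $r_n=o_p(1)$ and set $\mathscr{F}_\gamma=\{f(\cdot,\cdot,\btheta):|\btheta-\btheta_0|\le\gamma\}$. Degeneracy is preserved under restriction, and a subclass of a Euclidean class is Euclidean with the same constants. By (ii) and $f(\cdot,\cdot,\btheta_0)\equiv 0$, we get $\sup_{\mathscr{F}_\gamma} Qf^2\to 0$ as $\gamma\to 0$.

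I would then show that $\sup_{\mathscr{F}_\gamma} T_n f^2$ is close to $\sup_{\mathscr{F}_\gamma} Qf^2$, uniformly. The class $\{f^2\}$ is again Euclidean with constant envelope $F^2$, so a uniform LLN for $U$-processes (as in Sherman's Corollary 7, which was already used for consistency) applies. Hence $\E\sup_{\mathscr{F}_\gamma} T_n f^2 \le \epsilon(\gamma)+o(1)$, where $\epsilon(\gamma)\to0$. Combining this with Step 1, Jensen's inequality, and $J(\delta)\to 0$ gives
\[
\limsup_n \E \sup_{\mathscr{F}_\gamma}|nU_n f| \le \eta(\gamma),\qquad \eta(\gamma)\to 0 .
\]

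\emph{Step 3: pass to the random neighborhood.} For any $\epsilon,\eta>0$, choose $\gamma$ so that $\eta(\gamma)/\epsilon<\eta$. Then
\[
P\Big(\sup_{|\btheta-\btheta_0|\le r_n}|nU_nf|>\epsilon\Big)\le P(r_n>\gamma)+\E\sup_{\mathscr{F}_\gamma}|nU_nf|/\epsilon ,
\]
which is eventually below $2\eta$. A diagonal argument produces $b_n=o_p(1)$ with $\sup_{|\btheta-\btheta_0|\le r_n}|U_nf|\le b_n/n$, which is exactly the required uniformity.

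The main obstacle is Step 1, specifically the conditional chaining for the order-two Rademacher chaos and the fact that the resulting bound contains the random $L^2(T_n)$ diameter rather than $Qf^2$. Controlling that random diameter in Step 2 is the delicate part. There I would first try the direct bound $T_nf^2\le$ (a $U$-statistic of $f^2$) and apply the uniform LLN to the Euclidean class $\{f^2\}$. If that proves awkward, I would instead cite Sherman (1994), Theorem 3 and Corollary 8, and Nolan and Pollard (1987) for the inequality.
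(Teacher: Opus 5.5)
Your proposal is correct and takes essentially the same route as the paper, which gives no argument of its own and simply invokes Theorem 3 of Sherman (1993). Your outline is how that theorem is proved: the Nolan--Pollard maximal inequality for degenerate $U$-processes with the $\log N$ entropy integral, control of the random $L^2(T_n)$ radius through a uniform law of large numbers for the Euclidean class $\{f^2\}$, Jensen's inequality using concavity of $J$, and localization to shrinking neighborhoods. One small simplification: Step 3 needs no diagonal argument, because once $P(\sup_{|\theta-\theta_0|\le r_n}|nU_nf|>\epsilon)\to 0$ for every $\epsilon>0$ you may take $b_n=n\sup_{|\theta-\theta_0|\le r_n}|U_nf|$ directly.
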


Lemma \ref{lem:degen} is Theorem 3 of \cite{sherman1993}. \citet{pakes1989} provide simple criteria for determining the Euclidean property. \cite{nolan1987} give complementary criteria.

Denote $\btheta_n$ as the maximizer of $C_n(\btheta)$ over $\Theta$. The diagonal terms of $C_n(\btheta)$ in equation (2) of the main text vanish, and the two terms in its summand are exchanges of one another under $i \leftrightarrow j$; hence
\begin{align}
\begin{split}
&C_n(\btheta) = 2\left(1 - n^{-1}\right) U_n f(\cdot,\cdot,\btheta),\\
&f(w_1, w_2, \btheta) =\ind\{y_1 > y_2,\ \delta_2 = 1\}\,
\ind\{H(x_1,\bz_1;\btheta) < H(x_2,\bz_2;\btheta)\},
\label{eq:CnUn}
\end{split}
\end{align}
where $U_n f$ is as in Web Appendix \ref{sec:ustat}. The factor $2(1 - n^{-1})$ does not depend on $\btheta$, so $\btheta_n$ also maximizes $U_n f(\cdot,\cdot,\btheta)$ over $\Theta$; we work with the latter throughout. Let
$$
C(\btheta) = Qf(\cdot,\cdot,\btheta)= \E[\ind\{Y_1 > Y_2,\ \Delta_2 = 1\}\,\ind\{H(X_1,\bZ_1;\btheta) < H(X_2,\bZ_2;\btheta)\}]
$$
denote its expectation, so that $E\{C_n(\btheta)\} = 2(1 - 1/n)\,C(\btheta)$.

For each $\btheta$ in $\Theta$, write $\Gamma(\btheta)$ for $C(\btheta) - C(\btheta_0)$ and $\Gamma_n(\btheta)$ for $U_n f(\cdot,\cdot,\btheta) - U_n f(\cdot,\cdot,\btheta_0)$. Note that $\btheta_n$ maximizes $\Gamma_n(\btheta)$ over $\Theta$, and that $\E\{\Gamma_n(\btheta)\} = \Gamma(\btheta)$. As in Web Appendix \ref{sec:general}, we shall assume that $\btheta_0 = \mathbf{0}$, the zero vector in $\mathbb{R}^{2+2p}$. Thus $\Gamma_n(\mathbf{0}) = \Gamma(\mathbf{0}) = 0$.

In the course of proving the consistency of $\btheta_n$ in Web Appendix \ref{sec:consistency}, we showed that $\Lambda(\btheta)$ is uniquely maximized at $\btheta_0$. Since $\Lambda(\btheta)=2C(\btheta)$, the function $C(\btheta)$ is also uniquely maximized at $\btheta_0$, and it follows immediately that $\Gamma(\btheta)$ is maximized at $\mathbf{0}$.

Recall that $W = (Y, \Delta, X, \bZ)$ denotes an observation from the distribution $P$ on the set $S \subseteq \mathbb{R} \otimes \{0,1\} \otimes \mathbb{R} \otimes \mathbb{R}^p$, and that $W_1, \ldots, W_n$ denotes a sample of independent observations from $P$. For each $w = (y, \delta, x, \bz)$ in $S$ and each $\btheta$ in $\Theta$, define
\begin{align*}
\tau(w, \btheta)
&= \E\Big[\ind\{Y < y,\ \Delta = 1\}\,
   \ind\{H(x,\bz;\btheta) < H(X,\bZ;\btheta)\}\Big] \notag\\
&\quad + \delta \cdot \E\Big[\ind\{Y > y\}\,
   \ind\{H(X,\bZ;\btheta) < H(x,\bz;\btheta)\}\Big],
\end{align*}
which is the sum $Pf(w,\cdot,\btheta) + Pf(\cdot,w,\btheta)$ of the two conditional expectations of the kernel $f$ in (\ref{eq:CnUn}). The function $\tau(\cdot,\btheta)$ will be the kernel of the empirical process that drives the asymptotic behavior of $\btheta_n$. Notice that $E[\tau(\cdot,\btheta) - \tau(\cdot,\mathbf{0})] = 2\Gamma(\btheta)$. Write $\nabla_m$ for the $m$th partial derivative operator with respect to $\btheta$, and
$$
|\nabla_m|\sigma(\btheta) \equiv \sum_{i_1,\ldots,i_m}
\left|\frac{\partial^m}{\partial \theta_{i_1} \cdots \partial \theta_{i_m}}
\sigma(\btheta)\right| .
$$
The symbol $\|\cdot\|$ denotes the matrix norm $\|(a_{ij})\| = (\sum_{i,j} a_{ij}^2)^{1/2}$.

We now state the additional assumptions used in the normality proof. Let
$\mathscr{N}$ denote a neighborhood of $\mathbf{0}$.

\noindent\textbf{Assumption (7).}\hspace{0.5em}
\refstepcounter{as}\label{as:reg}%
In addition to assumptions (\ref{as:G})--(\ref{as:theta}) of Web Appendix \ref{sec:assump}, suppose that $\mathbf{0}$ is an interior point of $\Theta$ and that
    \begin{itemize}[label={},leftmargin=2em,itemsep=2pt]
        \item[(i)]\label{as:reg-i} For each $w$ in $S$, all mixed second partial derivatives of $\tau(w,\cdot)$ exist on $\mathscr{N}$.
        \item[(ii)]\label{as:reg-ii} There is an integrable function $M(w)$ such that for all $w$ in $S$ and $\btheta$ in $\mathscr{N}$,
$$
\|\nabla_2 \tau(w, \btheta) - \nabla_2 \tau(w, \mathbf{0})\| \leq M(w)\,|\btheta| .
$$
        \item[(iii)]\label{as:reg-iii} $\E|\nabla_1 \tau(\cdot, \mathbf{0})|^2 < \infty$.
        \item[(iv)]\label{as:reg-iv} $\E|\nabla_2|\tau(\cdot, \mathbf{0}) < \infty$.
        \item[(v)]\label{as:reg-v} The matrix $\E\nabla_2 \tau(\cdot, \mathbf{0})$ is negative definite.
\end{itemize}

The conditions of assumption (\ref{as:reg}) are standard regularity conditions sufficient to support an argument based on a Taylor expansion of $\tau(w, \cdot)$ about $\mathbf{0}$; part (v) is what makes the limiting variance well defined. Sufficient conditions on the distribution of $W$ for assumption (\ref{as:reg}) to hold are discussed in Web Appendix \ref{sec:UV}.

\begin{theorem}
If assumptions (\ref{as:G})--(\ref{as:reg}) hold, then
$
\sqrt{n}\,\btheta_n \Rightarrow N\left(\mathbf{0},\ \bV^{-1}\bU\bV^{-1}\right),
$
where $2\bV = E\nabla_2\tau(\cdot,\mathbf{0})$ and
$\bU = E\big[\nabla_1\tau(\cdot,\mathbf{0})\nabla_1\tau(\cdot,\mathbf{0})^{\T}\big]$.
\end{theorem}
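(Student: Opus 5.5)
The plan follows Sherman's (1993) route exactly, using Lemma \ref{lem:rootn}, Lemma \ref{lem:normal} and Lemma \ref{lem:degen} together with the decomposition \eqref{eq:decomp}. Apply \eqref{eq:decomp} to the kernel $f(\cdot,\cdot,\btheta)-f(\cdot,\cdot,\mathbf{0})$ from \eqref{eq:CnUn}. This gives, for every $\btheta\in\Theta$,
\begin{align*}
\Gamma_n(\btheta)=\Gamma(\btheta)+P_n\{\tau(\cdot,\btheta)-\tau(\cdot,\mathbf{0})-2\Gamma(\btheta)\}+U_n h(\cdot,\cdot,\btheta),
\end{align*}
where $h$ is the degenerate component of $f(\cdot,\cdot,\btheta)-f(\cdot,\cdot,\mathbf{0})$. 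Here we use that $g(u,\btheta)=\tau(u,\btheta)-\tau(u,\mathbf{0})-2\Gamma(\btheta)$, which holds because $\tau(w,\btheta)=Pf(w,\cdot,\btheta)+Pf(\cdot,w,\btheta)$. The three terms are treated separately.

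\emph{Deterministic and linear terms.} Taylor-expand $\tau(w,\cdot)$ about $\mathbf{0}$ to second order, using Assumption (\ref{as:reg})(i)--(ii):
\begin{align*}
\tau(w,\btheta)-\tau(w,\mathbf{0})=\btheta^{\T}\nabla_1\tau(w,\mathbf{0})+\tfrac12\btheta^{\T}\nabla_2\tau(w,\mathbf{0})\btheta+R(w,\btheta),\qquad |R(w,\btheta)|\le M(w)|\btheta|^3.
\end{align*}
Since $\mathbf{0}$ is an interior maximizer of $\Gamma$, and the derivatives can be interchanged with the expectation by (iii)--(iv) and dominated convergence, we have $\E\nabla_1\tau(\cdot,\mathbf{0})=2\nabla\Gamma(\mathbf{0})=0$. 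Taking expectations then gives $\Gamma(\btheta)=\tfrac12\btheta^{\T}\bV\btheta+o(|\btheta|^2)$ with $2\bV=\E\nabla_2\tau(\cdot,\mathbf{0})$. Because $\bV$ is negative definite by (v), condition (i) of Lemma \ref{lem:rootn} follows. For the linear term, the expansion gives
\begin{align*}
P_n g(\cdot,\btheta)=\btheta^{\T}n^{-1/2}\bW_n+o_p(|\btheta|^2),\qquad \bW_n=n^{-1/2}\sum_{i}\nabla_1\tau(W_i,\mathbf{0}).
\end{align*}
The quadratic part contributes $\btheta^{\T}\{(P_n-P)\nabla_2\tau\}\btheta=o_p(|\btheta|^2)$ by the law of large numbers using (iv). The remainder is bounded by $|\btheta|^3 P_n M=o_p(|\btheta|^2)$ uniformly over $o_p(1)$ neighborhoods. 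By (iii) and the central limit theorem, $\bW_n\Rightarrow N(\mathbf{0},\bU)$, and $\btheta^{\T}\bW_n/\sqrt n=O_p(|\btheta|/\sqrt n)$.

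\emph{Degenerate term (main obstacle).} We need $U_nh(\cdot,\cdot,\btheta)=o_p(1/n)$ uniformly over $o_p(1)$ neighborhoods, which we get from Lemma \ref{lem:degen}. Two conditions must be checked. The first is the Euclidean property of $\{f(\cdot,\cdot,\btheta)\}$, with constant envelope 1, and hence of $\{h\}$. The set $\{H(x_1,\bz_1;\btheta)<H(x_2,\bz_2;\btheta)\}$ is a Boolean combination of four half-space type sets $\{\ell_a(w_1;\btheta)<\ell_b(w_2;\btheta)\}$, $a,b\in\{1,2\}$, each linear in $\btheta$. Write $\max(a_1,a_2)<\max(b_1,b_2)$ as $\bigcup_b\bigcap_a\{a<b\}$; each such set is the positivity set of a function in a finite-dimensional vector space. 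By Pakes and Pollard (1989, Lemmas 2.4--2.5) these sets form a VC class, and finite unions and intersections together with multiplication by the fixed indicator $\ind\{y_1>y_2,\delta_2=1\}$ preserve the Euclidean property. Sherman's (1993) Corollary 17 then passes the property to the centered $P$-degenerate class $\{h\}$. The second condition is $Q h^2\to0$ as $\btheta\to\mathbf{0}$. For this it suffices that $Q\{f(\cdot,\cdot,\btheta)-f(\cdot,\cdot,\mathbf{0})\}^2\le \P\{\operatorname{sign}\Delta H(\btheta)\ne\operatorname{sign}\Delta H(\mathbf{0})\}\to0$. This follows by dominated convergence from continuity of $\btheta\mapsto\Delta H(\btheta)$ and from $\P\{\Delta H(\mathbf{0})=0\}=0$, both shown in the consistency argument of Web Appendix \ref{sec:consistency}. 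The non-differentiability of $H$ at the kink is harmless here because only indicators and their probabilities enter.

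\emph{Conclusion.} Combining the three pieces verifies condition (ii) of Lemma \ref{lem:rootn}. Together with consistency from Web Appendix \ref{sec:consistency}, this gives $|\btheta_n|=O_p(n^{-1/2})$. On $O_p(n^{-1/2})$ neighborhoods the $o_p(|\btheta|^2)$ terms are $o_p(1/n)$, so $\Gamma_n(\btheta)=\tfrac12\btheta^{\T}\bV\btheta+n^{-1/2}\btheta^{\T}\bW_n+o_p(1/n)$, which is \eqref{eq:quad}. Lemma \ref{lem:normal} then yields $\sqrt n\,\btheta_n\Rightarrow N(\mathbf{0},\bV^{-1}\bU\bV^{-1})$.
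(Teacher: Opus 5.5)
Your proposal is correct and follows the paper's proof essentially step for step. It uses the same decomposition $\Gamma_n=\Gamma+P_n g+U_n h$ with $g(\cdot,\btheta)=\tau(\cdot,\btheta)-\tau(\cdot,\mathbf{0})-2\Gamma(\btheta)$, the same Taylor-expansion/LLN/CLT handling of the first two terms, Lemma~\ref{lem:degen} (Euclidean class plus $Qh^2\to 0$) for the degenerate term, and then Lemmas~\ref{lem:rootn} and~\ref{lem:normal}.

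The differences are minor. You bound $Qh^2\le Q\tilde f^2$ by the probability of a sign disagreement, where the paper instead argues a.s.\ continuity of $h$. Your derivative-interchange argument for $\E\nabla_1\tau(\cdot,\mathbf{0})=\mathbf{0}$ also needs the integrable $M$ from (ii); taking expectations of the expansion, as the paper does, avoids this. Finally, the result that transfers the Euclidean property to $\{h\}$ is Corollaries 17 and 21 of Nolan and Pollard (1987), not a result of Sherman (1993).
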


\begin{proof}
We will show that
\begin{align}
\Gamma_n(\btheta) = \frac{1}{2}\btheta^{\T}\bV\btheta
+ \frac{1}{\sqrt{n}}\btheta^{\T}\bW_n
+ o_p\left(|\btheta|^2\right) + o_p(1/n)
\label{eq:master}
\end{align}
uniformly in $o_p(1)$ neighborhoods of $\mathbf{0}$, where $\bW_n$ converges in distribution to a $N(\mathbf{0}, \bU)$ random vector.

We first check the conditions of Lemma \ref{lem:rootn}. By (\ref{eq:Gamma}) below and the negative definiteness of $\bV$ in assumption (\ref{as:reg-v}), there exist a neighborhood $\mathscr{N}$ of $\mathbf{0}$ and a constant $\kappa > 0$ such that $\Gamma(\btheta) \leq -\kappa|\btheta|^2$ for all $\btheta$ in $\mathscr{N}$, which is condition (i). Subtracting (\ref{eq:Gamma}) from (\ref{eq:master}) gives
$$
\Gamma_n(\btheta) = \Gamma(\btheta) + O_p(|\btheta|/\sqrt{n})
+ o_p\left(|\btheta|^2\right) + o_p(1/n)
$$
uniformly over $o_p(1)$ neighborhoods of $\mathbf{0}$, which is condition (ii). Lemma \ref{lem:rootn} therefore gives
\begin{align}
\left|\btheta_n\right| = O_p(1/\sqrt{n}).
\label{eq:rootn}
\end{align}
By (\ref{eq:rootn}), $|\btheta|^2 = O_p(1/n)$ uniformly over $O_p(1/\sqrt{n})$ neighborhoods of $\mathbf{0}$, so the term $o_p(|\btheta|^2)$ in (\ref{eq:master}) is $o_p(1/n)$ there. Hence (\ref{eq:master}) reduces to the expansion required by Lemma \ref{lem:normal}, and the result follows from Lemma \ref{lem:normal}.

It remains to establish (\ref{eq:master}).

For each $(w_1, w_2)$ in $S \otimes S$ and each $\btheta$ in $\Theta$, define the centered kernel
\begin{align*}
\tilde{f}&(w_1, w_2, \btheta) = f(w_1, w_2, \btheta) - f(w_1, w_2, \mathbf{0})\\
= &\ind\{y_1 > y_2,\ \delta_2 = 1\}\Big[
\ind\{H(x_1,\bz_1;\btheta) < H(x_2,\bz_2;\btheta)\}
- \ind\{H(x_1,\bz_1;\mathbf{0}) < H(x_2,\bz_2;\mathbf{0})\}\Big],
\end{align*}
so that $\Gamma_n(\btheta) = U_n\tilde{f}(\cdot,\cdot,\btheta)$, $Q\tilde{f}(\cdot,\cdot,\btheta) = \Gamma(\btheta)$, and $\tilde{f}(\cdot,\cdot,\mathbf{0}) \equiv 0$. Since $\Gamma_n(\btheta)$ is a $U$-statistic of order two with expectation $\Gamma(\btheta)$, we may apply the decomposition in \eqref{eq:decomp} to write
$$
\Gamma_n(\btheta) = \Gamma(\btheta) + P_n g(\cdot,\btheta)
+ U_n h(\cdot,\cdot,\btheta),
$$
where
$g(w, \btheta) = P\tilde{f}(w,\cdot,\btheta) + P\tilde{f}(\cdot,w,\btheta)
- 2\Gamma(\btheta)$, and
$$
h(w_1, w_2, \btheta) = \tilde{f}(w_1,w_2,\btheta)
- P\tilde{f}(w_1,\cdot,\btheta) - P\tilde{f}(\cdot,w_2,\btheta)
+ \Gamma(\btheta).
$$

First, we show that
\begin{align}
\Gamma(\btheta) = \frac{1}{2}\btheta^{\T}\bV\btheta
+ o\left(|\btheta|^2\right) \qquad \text{as } \btheta \to \mathbf{0}.
\label{eq:Gamma}
\end{align}
Fix $w$ in $S$ and $\btheta$ in $\mathscr{N}$. Invoke assumption (\ref{as:reg-i}) and expand $\tau(w,\btheta)$ about $\mathbf{0}$ to get
\begin{align}
\tau(w,\btheta) = \tau(w,\mathbf{0}) + \btheta^{\T}\nabla_1\tau(w,\mathbf{0})
+ \frac{1}{2}\btheta^{\T}\nabla_2\tau(w,\btheta^*)\btheta
\label{eq:taylor}
\end{align}
for $\btheta^*$ between $\btheta$ and $\mathbf{0}$. By assumption (\ref{as:reg-ii}), for each $w$ in $S$ and each $\btheta$ in $\mathscr{N}$,
\begin{align}
\left\|\btheta^{\T}\left[\nabla_2\tau(w,\btheta)
- \nabla_2\tau(w,\mathbf{0})\right]\btheta\right\| \leq M(w)|\btheta|^3 .
\label{eq:lip}
\end{align}
Take expectations in (\ref{eq:taylor}), and apply (\ref{eq:lip}), the integrability of $M$, and
$E[\tau(\cdot,\btheta) - \tau(\cdot,\mathbf{0})] = 2\Gamma(\btheta)$, to get that
$
2\Gamma(\btheta) = \btheta^{\T}E\nabla_1\tau(\cdot,\mathbf{0})
+ \btheta^{\T}\bV\btheta + o\left(|\btheta|^2\right),\text{as } \btheta \to \mathbf{0} .
$
Since $\Gamma(\btheta)$ is maximized at $\mathbf{0}$, which is an interior point of $\Theta$ by assumption (\ref{as:reg}), the coefficient of the linear term in the last expression must be the zero vector. Divide through by 2 to establish (\ref{eq:Gamma}).

Next, we show that
\begin{align}
P_n g(\cdot,\btheta) = \frac{1}{\sqrt{n}}\btheta^{\T}\bW_n
+ o_p\left(|\btheta|^2\right)
\label{eq:Png}
\end{align}
uniformly over $o_p(1)$ neighborhoods of $\mathbf{0}$, where $\bW_n$ converges in distribution to a $N(\mathbf{0}, \bU)$ random vector with $\bU = E[\nabla_1\tau(\cdot,\mathbf{0})\nabla_1\tau(\cdot,\mathbf{0})^{\T}]$. Note that
$$
g(w,\btheta) = \tau(w,\btheta) - \tau(w,\mathbf{0}) - 2\Gamma(\btheta).
$$
Apply (\ref{eq:Gamma}), (\ref{eq:taylor}), and (\ref{eq:lip}) to see that
\begin{align}
P_n g(\cdot,\btheta) = \frac{1}{\sqrt{n}}\btheta^{\T}\bW_n
+ \frac{1}{2}\btheta^{\T}\bD_n\btheta + o\left(|\btheta|^2\right)
+ R_n(\btheta)
\label{eq:Png2}
\end{align}
uniformly over $o_p(1)$ neighborhoods of $\mathbf{0}$, where $\bW_n = \sqrt{n}\,P_n\nabla_1\tau(\cdot,\mathbf{0})$, $\bD_n = P_n\nabla_2\tau(\cdot,\mathbf{0}) - 2\bV$,  $\left|R_n(\btheta)\right| \leq |\btheta|^3\, P_n M(\cdot)$.
Deduce from assumption (\ref{as:reg-iii}) and the fact, established above, that $E\nabla_1\tau(\cdot,\mathbf{0}) = \mathbf{0}$, that $\bW_n$ converges in distribution to a $N(\mathbf{0}, \bU)$ random vector. By assumption (\ref{as:reg-iv}) and a weak law of large numbers, $\bD_n$ converges in probability to the zero matrix as $n$ tends to infinity, so that $\frac{1}{2}\btheta^{\T}\bD_n\btheta = o_p(|\btheta|^2)$. Finally, deduce from the integrability of $M$ and a weak law of large numbers that $R_n(\btheta) = o_p(|\btheta|^2)$ uniformly over $o_p(1)$ neighborhoods of $\mathbf{0}$. This establishes (\ref{eq:Png}).

In order to establish (\ref{eq:master}), it remains to show that
\begin{align}
U_n h(\cdot,\cdot,\btheta) = o_p(1/n)
\label{eq:Unh}
\end{align}
uniformly over $o_p(1)$ neighborhoods of $\mathbf{0}$. Lemma \ref{lem:degen} will do the job.

Web Appendix~\ref{sec:euclidean} shows that the class $\mathscr{F} = \{f(\cdot,\cdot,\btheta) : \btheta \in \Theta\}$ is Euclidean for the constant envelope 1. The Euclidean property of the $P$-degenerate class $\{h(\cdot,\cdot,\btheta) : \btheta \in \Theta\}$ follows from this fact in combination with Corollaries 17 and 21 of \citet{nolan1987}, which is condition (i) of Lemma~\ref{lem:degen}. Equation~\eqref{eq:Unh} will therefore follow from Lemma~\ref{lem:degen} provided
\begin{align}
Qh(\cdot,\cdot,\btheta)^2 \to 0 \qquad \text{as } \btheta \to \mathbf{0},
\label{eq:Qh}
\end{align}
where $Q$ is the product measure $P \otimes P$.

By assumption (\ref{as:theta}), $H(\cdot,\bz;\btheta)$ is piecewise linear in $x$ with slopes $-1$ and $\beta_1$, both bounded away from zero, and hence strictly monotone on each of its two branches. By assumption (\ref{as:dens}), $x$ has an everywhere positive Lebesgue density conditional on $\bz$. It follows that the distribution of $H(X,\bZ;\mathbf{0})$ is absolutely continuous with respect to Lebesgue measure on $\mathbb{R}$, and hence that
$
Q\big\{H(x_1,\bz_1;\mathbf{0}) = H(x_2,\bz_2;\mathbf{0})\big\} = 0 .
$
Deduce that $f(w_1,w_2,\cdot)$ is continuous at $\mathbf{0}$ for $Q$ almost all $(w_1,w_2)$. The boundedness of $f$ and a dominated convergence argument imply that the same holds true for $h(w_1,w_2,\cdot)$. Since $h$ is bounded, another dominated convergence argument establishes (\ref{eq:Qh}), which in turn establishes (\ref{eq:Unh}).

Put it all together. Combine (\ref{eq:Gamma}), (\ref{eq:Png}), and (\ref{eq:Unh}) to get (\ref{eq:master}). This proves Theorem 1. \hfill$\square$
\end{proof}

\subsection{Proof of the Euclidean Property}\label{sec:euclidean}

Consider the class of functions $\mathscr{F}=\{f(\cdot,\cdot,\btheta):\btheta \in \Theta\}$, where, for each $(w_1,w_2)$ in $S \otimes S$ and each $\btheta$ in $\Theta$,
$
    f(w_1,w_2,\btheta)=\ind\{y_1>y_2,\ \delta_2=1\}\,
    \ind\{H(x_1,\bz_1;\btheta)<H(x_2,\bz_2;\btheta)\}.
$

In this part, we show that $\mathscr{F}$ is Euclidean for the constant envelope 1. 

Let $\mathscr{G}$ denote the vector space of all affine functions of $(t,\,y_1,\,y_2,\,\delta_2,\,x_1,\,x_2,\,\bz_1,\,\bz_2)$ on $S\otimes S\otimes\mathbb{R}$; that is, the collection of all functions of the form
$
g(w_1,w_2,t)=\gamma t+\gamma_1 y_1+\gamma_2 y_2+\psi\delta_2
+\lambda_1 x_1+\lambda_2 x_2
+\bkappa_1^{\T}\bz_1+\bkappa_2^{\T}\bz_2+c ,
$
with $\gamma,\gamma_1,\gamma_2,\psi,\lambda_1,\lambda_2,c\in\mathbb{R}$ and $\bkappa_1,\bkappa_2\in\mathbb{R}^p$. Notice that $\mathscr{G}$ is a $(2p+7)$-dimensional vector space of real-valued functions on $S\otimes S\otimes\mathbb{R}$. By Lemma 2.4 in \citet{pakes1989}, the class of sets of the form $\{g\geq r\}$ or $\{g>r\}$, with $g\in\mathscr{G}$ and $r\in\mathbb{R}$, is a VC class. We use this fact to show that the set of subgraphs of functions belonging to $\mathscr{F}$ forms a VC class of sets.

Since $f$ takes values in $\{0,1\}$ and $\delta_2$ takes values in $\{0,1\}$, for each $\btheta$ in $\Theta$,
\begin{align}
    \begin{split}
        &\text{subgraph}(f(\cdot,\cdot,\btheta))\\
        &=\{(w_1,w_2,t)\in S\otimes S\otimes\mathbb{R}:\ 0<t<f(w_1,w_2,\btheta)\}\\
        &=\{y_1-y_2>0\}\cap\{\delta_2\geq 1\}
        \cap\{H(x_1,\bz_1;\btheta)<H(x_2,\bz_2;\btheta)\}
        \cap\{t>0\}\cap\{1-t>0\}.
    \end{split}\label{eq:subgraph}
\end{align}
Write $a_k=-x_k+\bz_k^{\T}\balpha_1$ and $b_k=\beta_0+\beta_1 x_k+\bz_k^{\T}\balpha_2$ for $k=1,2$, so that $H(x_k,\bz_k;\btheta)=\max(a_k,b_k)$. Because $\max(a_2,b_2)>M$ if and only if $a_2>M$ or $b_2>M$,
$
        \{H(x_1,\bz_1;\btheta)<H(x_2,\bz_2;\btheta)\}
        =\big[\{a_2>a_1\}\cap\{a_2>b_1\}\big]
        \cup\big[\{b_2>a_1\}\cap\{b_2>b_1\}\big],
$
where
\begin{align*}
\{a_2>a_1\}&=\{(x_1-x_2)+(\bz_2-\bz_1)^{\T}\balpha_1>0\},\\
\{a_2>b_1\}&=\{-x_2-\beta_1 x_1-\beta_0
+\bz_2^{\T}\balpha_1-\bz_1^{\T}\balpha_2>0\},\\
\{b_2>a_1\}&=\{x_1+\beta_1 x_2+\beta_0
+\bz_2^{\T}\balpha_2-\bz_1^{\T}\balpha_1>0\},\\
\{b_2>b_1\}&=\{\beta_1(x_2-x_1)+(\bz_2-\bz_1)^{\T}\balpha_2>0\}.
\end{align*}
Each of the four displayed functions is an affine function of $(x_1,x_2,\bz_1,\bz_2)$ and therefore belongs to $\mathscr{G}$; so do $g_1=y_1-y_2$, $g_2=\delta_2$, $g_3=t$, and $g_4=1-t$. By Lemma 2.4 of \citet{pakes1989}, each of the corresponding sets lies in a VC class. The subgraph in \eqref{eq:subgraph} is a finite union of finite intersections of these sets, so by Lemma 2.5 of \citet{pakes1989} the class $\{\text{subgraph}(f(\cdot,\cdot,\btheta)):\btheta\in\Theta\}$ is a VC class of sets. By Lemma 2.12 of \citet{pakes1989}, $\mathscr{F}$ is Euclidean for every
envelope, and in particular for the constant envelope 1, since $|f|\leq 1$.

\subsection{One-Dimensional Forms of the Variance Components $U$ and $V$}\label{sec:UV}

The variance components $U$ and $V$ from Theorem \ref{thm-1} of the main text admit tractable one-dimensional forms by conditioning on the value of the index $H(X_i,\bZ_i;\btheta_0)$ rather than on the full covariate vector; the argument is that of Theorem 4 of \citet{sherman1993}, adapted to the piecewise linear index and to right censoring.

Because $H(\cdot,\bz;\btheta)$ is piecewise linear in $x$, its gradient with respect to $\btheta$ is piecewise constant, with a jump at the critical point:
\begin{align*}
\bPsi(x,\bz) \;=\; \nabla_{\btheta} H(x,\bz;\btheta_0)
\;=\;
\begin{cases}
\big(0,\ 0,\ \bz^{\T},\ \mathbf{0}^{\T}\big)^{\T},
& x < x_c(\bz;\btheta_0),\\[4pt]
\big(1,\ x,\ \mathbf{0}^{\T},\ \bz^{\T}\big)^{\T},
& x > x_c(\bz;\btheta_0).
\end{cases}
\label{eq:Psi}
\end{align*}
By assumption (\ref{as:dens}), $\Pr\{X_i = x_c(\bZ_i;\btheta_0)\} = 0$, so $\bPsi(X_i,\bZ_i)$ is defined almost surely. The $(2+2p)$-vector $\bPsi$ plays the role of the regressor vector in \citet{sherman1993}; it is here that the non-differentiability of $H$ at the critical point enters. By assumptions (\ref{as:dens}) and (\ref{as:theta}), the distribution of $H(X_i,\bZ_i;\btheta_0)$ is absolutely continuous; write $g_0(\cdot)$ for its density.

For a fixed observation $w=(y,\delta,x,\bz)$ and $t \in \mathbb{R}$, define
\begin{align*}
S(y,\delta,t) =
\E\Big[\ind\{y < Y\} - \delta\cdot\ind\{Y < y, \Delta = 1\}
\;\Big|\; H(X,\bZ;\btheta_0) = t\Big],
\end{align*}
where $(Y,\Delta,X,\bZ)$ denotes an independent observation from $P$, and let
\begin{align*}
\mu_0(t) = \E\big[\bPsi(X,\bZ) \mid H(X,\bZ;\btheta_0) = t\big] .
\end{align*}
Since two observations with the same value of the index are exchangeable, $S(\cdot,\cdot,\cdot)$ satisfies the moment condition
$
\E\big[S(Y_i,\Delta_i,t) \mid H(X_i,\bZ_i;\btheta_0) = t\big] = 0,\text{for every } t \in \mathbb{R} .
$
Write $S_3(y,\delta,t)$ for $\partial S(y,\delta,t)/\partial t$.

When $S(\cdot,\cdot,\cdot)$ is differentiable with respect to its third argument, $g_0(\cdot)$ is differentiable, and $\E\|\bPsi(X_i,\bZ_i)\|^2 < \infty$, the same change-of-variables argument as in Theorem 4 of \citet{sherman1993} gives
$
\nabla_1 \tau(w,\btheta_0)
= \big\{\bPsi(x,\bz) - \mu_0\big(H(x,\bz;\btheta_0)\big)\big\}\,
S\big(y,\delta,H(x,\bz;\btheta_0)\big)\,
g_0\big(H(x,\bz;\btheta_0)\big),
$
so that $U$ and $V$ collapse to one-dimensional integrals in the index $H$ with respect to $g_0$. Writing $H_i = H(X_i,\bZ_i;\btheta_0)$ and $\bPsi_i = \bPsi(X_i,\bZ_i)$ for brevity,
\begin{align*}
U &= \E\Big[
\big\{\bPsi_i - \mu_0(H_i)\big\}
\big\{\bPsi_i - \mu_0(H_i)\big\}^{\T}
\cdot S(Y_i,\Delta_i,H_i)^2\,
g_0(H_i)^2 \Big],
\\[4pt]
V &= 2^{-1}\,\E\Big[
\big\{\bPsi_i - \mu_0(H_i)\big\}
\big\{\bPsi_i - \mu_0(H_i)\big\}^{\T}
\cdot S_3(Y_i,\Delta_i,H_i)\,
g_0(H_i) \Big].
\end{align*}

These expressions also show that assumption~(\ref{as:reg})(v) need not be imposed separately. Since $H$ is the risk index, a larger value of $t$ corresponds to a stochastically smaller failure time, so $S(y,\delta,\cdot)$ is decreasing in its third argument and $S_3 < 0$. Writing
$$
\bW_i = \big\{\bPsi_i - \mu_0(H_i)\big\}
\sqrt{-S_3(Y_i,\Delta_i,H_i)\,g_0(H_i)} ,
$$
we obtain $-2V = \E\big[\bW_i \bW_i^{\T}\big]$, which is positive semi-definite. It is positive definite, and hence $V$ is negative definite, provided $\bPsi_i - \mu_0(H_i)$ is not almost surely contained in a proper linear subspace of $\mathbb{R}^{2+2p}$; this holds under assumptions~(\ref{as:dens}) and~(\ref{as:rank}).

These forms directly motivate a plug-in estimator by smoothing $g_0$ and $\mu_0$ in $t$ and estimating $S$ and $S_3$ via local methods; see also \citet{khan2007partial}.

\section*{Web Appendix B: Method to Determine Initial Values for MCE Optimization}\label{sec:web-appendix-B-init}

To complement technical details in applying the optimization step, we develop a method to obtain initial values by leveraging the U-shaped risk patterns, which reduces the computational complexity.

Recall the U-shaped risk component $H(X, \bZ) = \max(-X+\bZ^{\T}\balpha_1, \beta_0 + \beta_1 X + \bZ^{\T}\balpha_2)$ in model~(\ref{semi_model}) of the main text. Under the constraint $\beta_1>0$, $H(X,\bZ)$ is U-shaped in $X$, so for any risk level $a$ strictly between the minimum and the shoulder levels, there exist exactly two biomarker values satisfying $P(T<t|X,\bZ)=a$---one on each branch of the U-shaped curve---which we denote $X_l$ (left of the critical point) and $X_r$ (right of the critical point). Since $P(T<t|X,\bZ)$ is monotone in $H$, these are the solutions to $\max(-X+\bZ^{\T}\balpha_1, \beta_0 + \beta_1 X + \bZ^{\T}\balpha_2)=C_{t,a}$, where $C_{t,a} = G^{-1}(t)-F_\epsilon^{-1}(a)$ is a constant depending on $t$ and $a$, and $F_\epsilon(\cdot)$ is the cumulative distribution function of the error term $\epsilon$. At the minimum, $X_l=X_r$.

We exploit the linear relationship between $X_l$ and $X_r$ to obtain initial values, first in a simplified setting, then in the general case with continuous covariates. In the simplified setting when $\bZ$ is a fixed vector and $\balpha_1$ and $\balpha_2$ are known. $X_l$ and $X_r$ lie on opposite branches of $H$, satisfying $-X_l+\bZ^{\T}\balpha_1=C_{t,a}$ and $\beta_0+\beta_1 X_r+\bZ^{\T}\balpha_2=C_{t,a}$. Eliminating $C_{t,a}$ yields a linear relationship between $X_r$ and $X_l$,
$$X_r=-\beta_1^{-1}\beta_0-\beta_1^{-1}X_l+\beta_1^{-1}(\bZ^{\T}\balpha_1-\bZ^{\T}\balpha_2).$$
Regressing $X_r$ on $X_l$, the slope and intercept estimates supply the ad-hoc approximated values of $\beta_0$ and $\beta_1$, which serve as initial values for our optimizer.

In real-life datasets, $\bZ \in \R^p$ contains continuous covariates. In this case, we extend this approach by discretizing each continuous component at clinically meaningful cut points and treating every unique combination of the discretized covariates as a subgroup, $z=1,\ldots,2^p$. Denoting the paired roots for subgroup $z$ by $X_l(z)$ and $X_r(z)$, the same elimination of $C_{t,a}$ in equations $-X_l(z)+\bZ_z^{\T}\balpha_1=C_{t,a}$ and $\beta_0+\beta_1 X_r(z)+\bZ_z^{\T}\balpha_2=C_{t,a}$ yields a linear relationship between $X_r$, $X_l$ and $\bZ_z$ within each subgroup,
$$X_r(z)=-\beta_1^{-1}\beta_0-\beta_1^{-1}X_l(z)+\beta_1^{-1}(\bZ_z^{\T}\balpha_1-\bZ_z^{\T}\balpha_2).$$
Regressing $X_r(z)$ on $X_l(z)$ and the subgroup indicators $z$ yields estimates of $\beta_0$, $\beta_1$, $\balpha_1$ and $\balpha_2$ as initial values for optimization. To ensure identifiability, we parameterize the model using three free parameters $(\balpha_{2z},\beta_{0z},\beta_{1z})$ for each subgroup $z$, while fixing $\balpha_{1z}=0$. This constraint resolves the non-identifiability arising from additive shifts in $\balpha_1$ and $\beta_0$, ensuring a unique representation of the subgroup-specific risk function.

To estimate $X_l(z)$ and $X_r(z)$ from data, we approximate the risk curve within each subgroup via LOESS-smoothed incidence rates on a binned BMI grid. Within each subgroup, we partition the BMI range into half-open bins $\mathcal{B}=\{[b_k,b_{k+1})\}_{k=1}^K$, where $b_1$ and $b_{K+1}$ are the smallest and largest BMI values, respectively, with each bin having a width of 2.5 units. The midpoint of bin $k$ is defined as $m_k=(b_k+b_{k+1})/2$. For individuals $\mathcal{I}_{zk}$ in bin $k$ and subgroup $z$, with follow-up time $T_i$ and event indicator $\Delta_i$, the incidence rate per 1,000 person-years is $IR_{zk}=1000 \times \frac{E_{zk}}{PY_{zk}}$, where $PY_{zk}=\sum_{i\in\mathcal{I}_{zk}}T_i$, $E_{zk}=\sum_{i\in \mathcal{I}_{zk}}\Delta_i$, computed only for bins with $PY_{zk}>0$. Smoothing the $(m_k, IR_{zk})$ sequence with LOESS regression yields a stable U-shaped empirical curve, from which we extract the paired $(X_l,X_r)$ for subsequent initialization. To improve robustness, we apply random perturbations (jittering) around each paired value and select the best-performing initializations. Specifically, for the $j$-th jitter run, we generate $x^{(j)}_l(h)=x_l(h)+\epsilon_l^{(j)}$, $x_r^{(j)}=x_r(h)+\epsilon_r^{(j)}$, where $\epsilon^{(j)}$ iid follows $N(0,\sigma^2_{BMI})$, with a small $\sigma_{BMI}$ (0.15--0.20 BMI units). Each jittered set yields an initial parameter vector $\theta_{init}^{(j)}$ passed to the COBYLA optimizer. Among all valid runs, we select the one with the largest C-index: $j^*=\text{argmax}_j\text{C-Index}(\hat H_{\theta^{(j)}})$.

\section*{Web Appendix C: Additional Simulation Results}
\label{sec:web-appendix-C}

This appendix provides per-method and full per-scenario numerical results complementing Table~\ref{tab:sim-main} and Figures~\ref{fig:sim-xc}--\ref{fig:sim-cindex} of the main text. Table~\ref{tab:sim-cindex} reports critical-point $X_c$ estimation accuracy and global test-set $C$-index side-by-side for the MCE and the five Cox alternatives across the nine $30\%$-censoring scenarios that anchor the main grid (panel (a) of Figures~\ref{fig:sim-xc}--\ref{fig:sim-cindex}); this method-by-method view is the source for the numerical claims in the Simulation Study section. Table~\ref{tab:supp-cindex-all} reports the test-set Harrell's $C$-index across all $23$ scenarios for the five method families plus the oracle, providing the full per-scenario complement to the visual summary of Figure~\ref{fig:sim-cindex}.

\clearpage
\begin{figure}[!htbp]
    \centering
    \includegraphics[width=\textwidth]{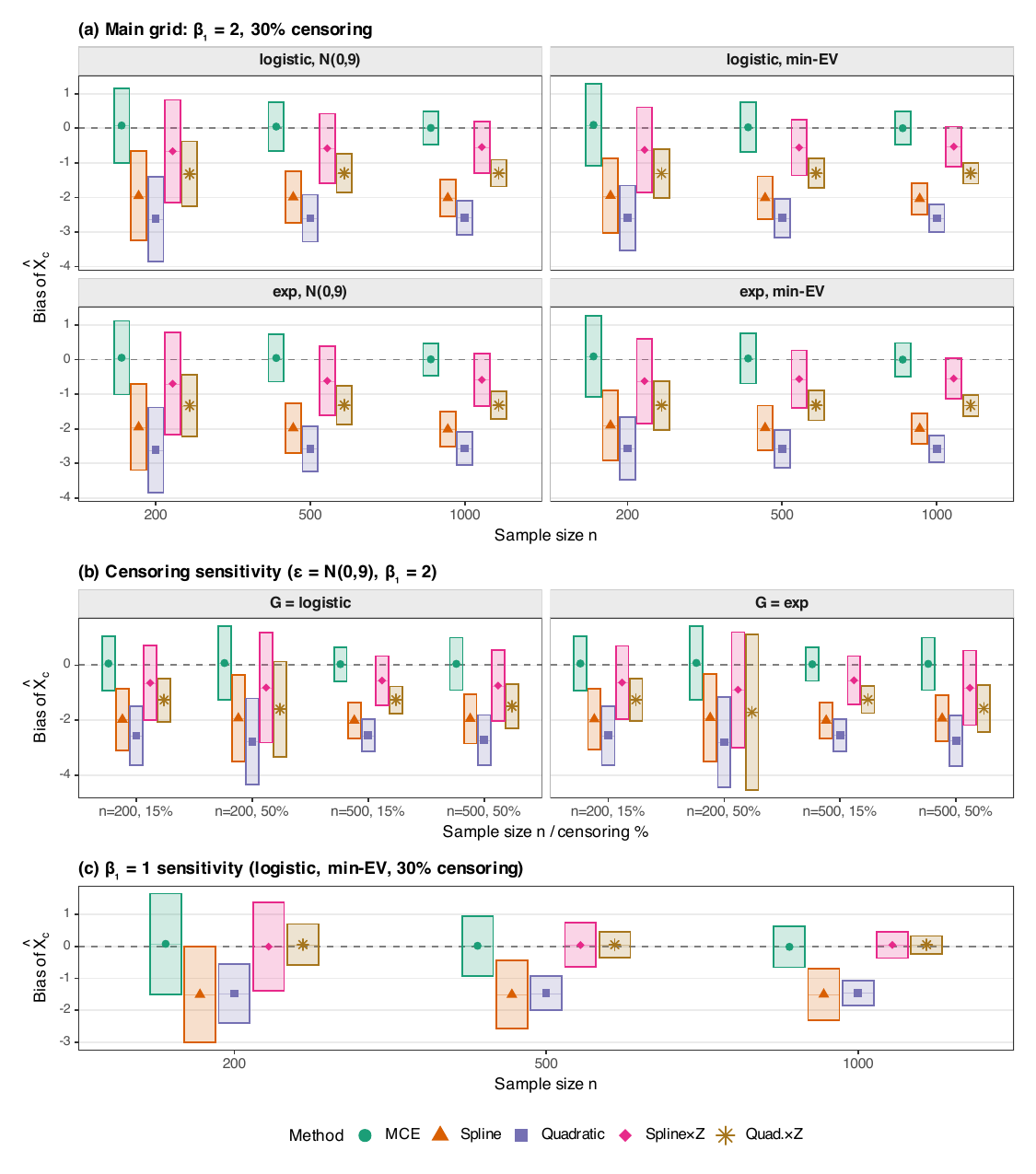}
    \caption{\small\linespread{1.0}\selectfont Bias of $\hat{X}_c$ across all $23$ simulation scenarios and five feasible methods (MCE and four Cox alternatives; the oracle is omitted because it does not estimate $X_c$). Each marker is the across-replicate mean bias and the surrounding box spans $\pm 1.96\times$ empirical standard error (ESE); the horizontal dashed line at $0$ denotes no bias. (a) Main grid: scenarios with $\beta_1 = 2$ and $30\%$ censoring across the four $(G,\epsilon)$ cells, with sample size $n$ on the $x$-axis. (b) Censoring sensitivity at $\epsilon = N(0,9)$ and $\beta_1 = 2$: the additional $15\%$ and $50\%$ censoring scenarios for each $G$ (the matching $30\%$ scenarios appear in panel (a)). (c) Milder U-shape sensitivity at $\beta_1 = 1$, logistic $G$, min-EV error, and $30\%$ censoring. Method colors and marker shapes are shared across panels and across Figure~\ref{fig:sim-cindex}; results are based on $1{,}000$ replications per scenario.}
    \label{fig:sim-xc}
\end{figure}

\begin{sidewaystable}[hb]
\centering
\caption{Critical point $X_c$ estimation and test-set $C$-index across nine scenarios and six methods.}\label{tab:sim-cindex}
\vspace{-6pt}
\setlength{\tabcolsep}{2.5pt}
\renewcommand{\arraystretch}{1.0}
\resizebox{\textwidth}{!}{%
\begin{tabular}{ccc @{\hskip 6pt} ccccc @{\hskip 6pt} ccccc @{\hskip 6pt} ccccc @{\hskip 6pt} ccccc @{\hskip 6pt} cccccc}
\hline
& & & \multicolumn{5}{c}{Bias} & \multicolumn{5}{c}{ESE} & \multicolumn{5}{c}{RMSE} & \multicolumn{5}{c}{CP} & \multicolumn{6}{c}{$C$-index} \\
\cline{4-8}\cline{9-13}\cline{14-18}\cline{19-23}\cline{24-29}
$n$ & $G$ & $\epsilon$ & MCE & Sp & Q & Sp$\!\times\!\!Z$ & Q$\!\times\!\!Z$ & MCE & Sp & Q & Sp$\!\times\!\!Z$ & Q$\!\times\!\!Z$ & MCE & Sp & Q & Sp$\!\times\!\!Z$ & Q$\!\times\!\!Z$ & MCE & Sp & Q & Sp$\!\times\!\!Z$ & Q$\!\times\!\!Z$ & MCE & Sp & Q & Sp$\!\times\!\!Z$ & Q$\!\times\!\!Z$ & Or \\
\hline
$200$ & logistic & $N(0,9)$ & $\mathbf{+0.08}$ & $-1.96$ & $-2.62$ & $-0.67$ & $-1.32$ & $0.55$ & $0.66$ & $0.62$ & $0.76$ & $0.48$ & $\mathbf{0.56}$ & $2.06$ & $2.70$ & $1.01$ & $1.40$ & $0.86$ & $0.04$ & $0.00$ & $0.78$ & $0.05$ & $\mathbf{0.876}$ & $0.853$ & $0.850$ & $0.869$ & $0.869$ & $0.875$ \\
$500$ & logistic & $N(0,9)$ & $\mathbf{+0.05}$ & $-1.99$ & $-2.60$ & $-0.58$ & $-1.30$ & $0.36$ & $0.38$ & $0.35$ & $0.51$ & $0.28$ & $\mathbf{0.36}$ & $2.03$ & $2.63$ & $0.77$ & $1.33$ & $0.94$ & $0.00$ & $0.00$ & $0.84$ & $0.00$ & $\mathbf{0.878}$ & $0.856$ & $0.852$ & $0.874$ & $0.872$ & $0.877$ \\
$1000$ & logistic & $N(0,9)$ & $\mathbf{+0.01}$ & $-2.02$ & $-2.59$ & $-0.55$ & $-1.30$ & $0.25$ & $0.27$ & $0.25$ & $0.38$ & $0.20$ & $\mathbf{0.25}$ & $2.03$ & $2.60$ & $0.67$ & $1.31$ & $0.96$ & $0.00$ & $0.00$ & $0.75$ & $0.00$ & $\mathbf{0.878}$ & $0.856$ & $0.852$ & $0.876$ & $0.873$ & $\mathbf{0.878}$ \\
$200$ & logistic & min-EV & $\mathbf{+0.10}$ & $-1.95$ & $-2.60$ & $-0.63$ & $-1.31$ & $0.60$ & $0.55$ & $0.48$ & $0.63$ & $0.36$ & $\mathbf{0.61}$ & $2.03$ & $2.64$ & $0.89$ & $1.36$ & $0.84$ & $0.01$ & $0.00$ & $0.82$ & $0.01$ & $0.875$ & $0.854$ & $0.851$ & $0.872$ & $0.871$ & $\mathbf{0.876}$ \\
$500$ & logistic & min-EV & $\mathbf{+0.03}$ & $-2.01$ & $-2.60$ & $-0.56$ & $-1.30$ & $0.37$ & $0.32$ & $0.28$ & $0.41$ & $0.22$ & $\mathbf{0.37}$ & $2.04$ & $2.61$ & $0.69$ & $1.32$ & $0.92$ & $0.00$ & $0.00$ & $0.81$ & $0.00$ & $0.877$ & $0.855$ & $0.852$ & $0.875$ & $0.873$ & $\mathbf{0.878}$ \\
$1000$ & logistic & min-EV & $\mathbf{+0.00}$ & $-2.04$ & $-2.60$ & $-0.53$ & $-1.30$ & $0.24$ & $0.23$ & $0.21$ & $0.29$ & $0.15$ & $\mathbf{0.24}$ & $2.05$ & $2.61$ & $0.61$ & $1.31$ & $0.98$ & $0.00$ & $0.00$ & $0.64$ & $0.00$ & $\mathbf{0.878}$ & $0.856$ & $0.852$ & $0.876$ & $0.873$ & $\mathbf{0.878}$ \\
$200$ & exp & $N(0,9)$ & $\mathbf{+0.06}$ & $-1.95$ & $-2.61$ & $-0.69$ & $-1.33$ & $0.54$ & $0.63$ & $0.63$ & $0.76$ & $0.46$ & $\mathbf{0.55}$ & $2.05$ & $2.69$ & $1.03$ & $1.41$ & $0.88$ & $0.02$ & $0.00$ & $0.79$ & $0.04$ & $\mathbf{0.872}$ & $0.848$ & $0.845$ & $0.865$ & $0.865$ & $0.866$ \\
$1000$ & exp & $N(0,9)$ & $\mathbf{+0.01}$ & $-2.02$ & $-2.57$ & $-0.58$ & $-1.32$ & $0.24$ & $0.26$ & $0.24$ & $0.39$ & $0.20$ & $\mathbf{0.24}$ & $2.03$ & $2.58$ & $0.70$ & $1.33$ & $0.96$ & $0.00$ & $0.00$ & $0.71$ & $0.00$ & $\mathbf{0.874}$ & $0.850$ & $0.847$ & $0.872$ & $0.868$ & $0.869$ \\
$1000$ & exp & min-EV & $\mathbf{+0.00}$ & $-2.00$ & $-2.58$ & $-0.55$ & $-1.33$ & $0.25$ & $0.23$ & $0.20$ & $0.30$ & $0.16$ & $\mathbf{0.25}$ & $2.01$ & $2.59$ & $0.62$ & $1.34$ & $0.97$ & $0.00$ & $0.00$ & $0.63$ & $0.00$ & $\mathbf{0.876}$ & $0.852$ & $0.848$ & $0.874$ & $0.871$ & $0.872$ \\
\hline
\end{tabular}}

\vspace{0.4em}
\begin{minipage}{\textwidth}\footnotesize\raggedright
Stat groups: Bias, empirical standard error (ESE), root mean squared error (RMSE), and empirical $95\%$ coverage probability (CP) of $\hat{X}_c$; the test-set Harrell's $C$-index is reported as a global summary of model discrimination. Methods (sub-columns): MCE; Sp (spline); Q (quadratic); Sp$\!\times\!\!Z$ (spline with $Z$ interaction); Q$\!\times\!\!Z$ (quadratic with $Z$ interaction); Or (oracle Cox using true $X_c(\bZ)$, reported only for $C$-index since the oracle does not estimate $X_c$). Results based on $1{,}000$ replications per scenario at $\approx 30\%$ censoring; true $X_c = 1.5$ in every scenario. Boldface marks the lowest $|\text{Bias}|$ and lowest RMSE (across the five feasible methods) and the highest $C$-index (across all six methods) within each scenario.
\end{minipage}
\end{sidewaystable}

\clearpage
\begin{figure}[!htbp]
    \centering
    \includegraphics[width=\textwidth]{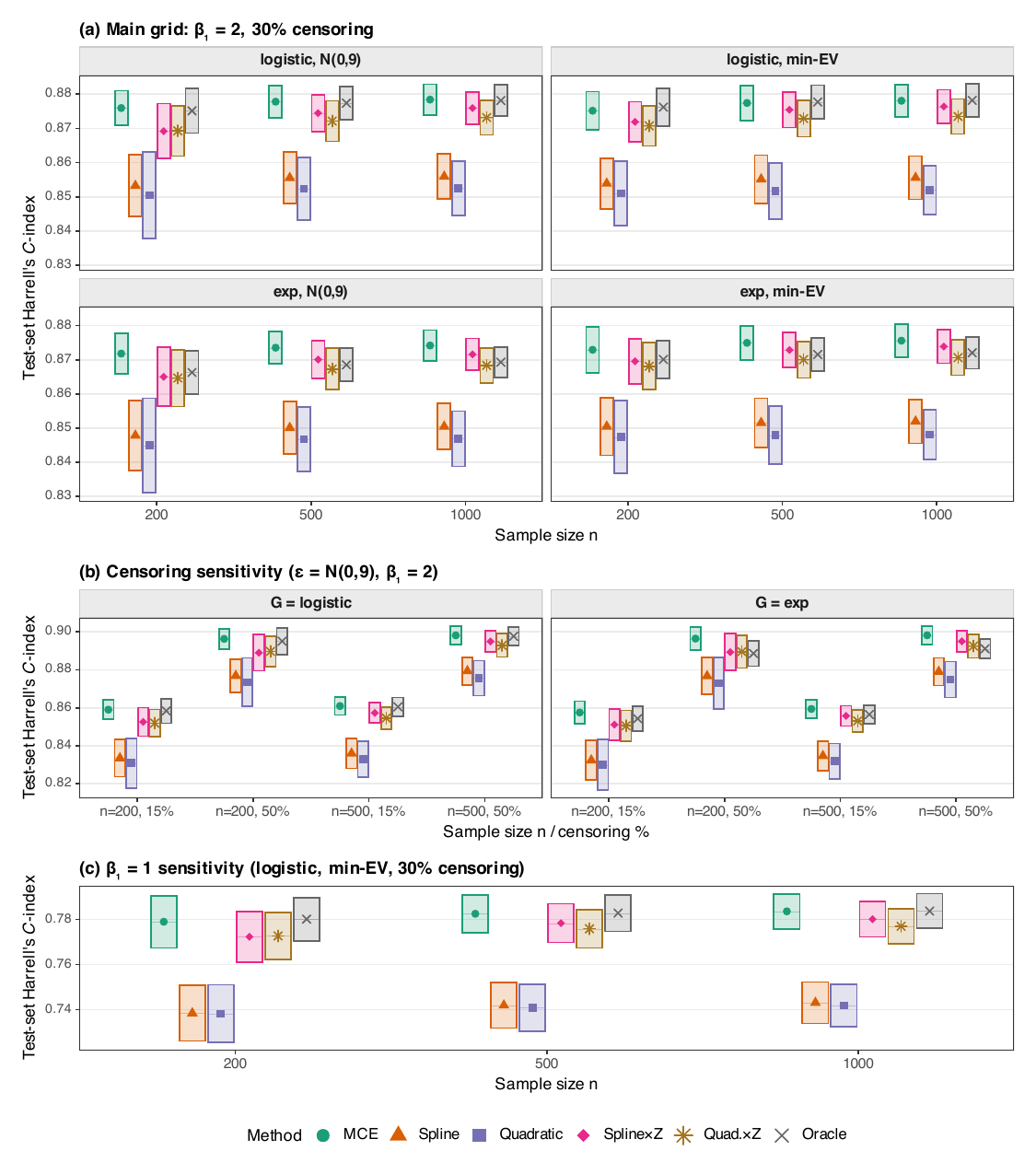}
    \caption{\small\linespread{1.0}\selectfont Held-out test-set Harrell's $C$-index across all $23$ simulation scenarios and six methods (MCE, four Cox alternatives, and the oracle Cox model using the true subject-specific $X_c(\bZ)$). Each marker is the across-replicate mean and the surrounding box spans $\pm 1.96\times$ between-replicate standard deviation. Panel arrangement (a)--(c) matches Figure~\ref{fig:sim-xc}. The compressed $C$-index gap between methods, despite the large differences in $\hat X_c$ accuracy shown in Figure~\ref{fig:sim-xc}, illustrates that the $C$-index alone is relatively insensitive to differences in risk-model structure; nonetheless, the MCE achieves the highest $C$-index across the great majority of scenarios and matches or exceeds the oracle under normal error. Results are based on $1{,}000$ replications per scenario, with $C$-index evaluated on an independent test set of $5{,}000$ observations.}
    \label{fig:sim-cindex}
\end{figure}

\begin{table}[!htbp]
\centering
\caption{Held-out test-set Harrell's $C$-index across all $23$ simulation scenarios, by estimation method. Each value is the mean across $1{,}000$ replications on an independent test set of $5{,}000$ observations. Boldface marks the highest $C$-index in each row.}\label{tab:supp-cindex-all}
\vspace{-6pt}
\resizebox{\textwidth}{!}{%
\begin{tabular}{ccccc ccccccc}
\hline
$n$ & $G$ & $\epsilon$ & $\beta_1$ & cens. & MCE & Spline & Quad. & Spline$\times Z$ & Quad.$\times Z$ & Oracle \\
\hline
$200$ & logistic & $N(0,9)$ & $2$ & $15\%$ & $\mathbf{0.859}$ & $0.834$ & $0.831$ & $0.853$ & $0.852$ & $0.858$ \\
$200$ & logistic & $N(0,9)$ & $2$ & $30\%$ & $\mathbf{0.876}$ & $0.853$ & $0.850$ & $0.869$ & $0.869$ & $0.875$ \\
$200$ & logistic & $N(0,9)$ & $2$ & $50\%$ & $\mathbf{0.896}$ & $0.877$ & $0.873$ & $0.889$ & $0.890$ & $0.895$ \\
$500$ & logistic & $N(0,9)$ & $2$ & $15\%$ & $\mathbf{0.861}$ & $0.836$ & $0.833$ & $0.857$ & $0.855$ & $\mathbf{0.861}$ \\
$500$ & logistic & $N(0,9)$ & $2$ & $30\%$ & $\mathbf{0.878}$ & $0.856$ & $0.852$ & $0.874$ & $0.872$ & $0.877$ \\
$500$ & logistic & $N(0,9)$ & $2$ & $50\%$ & $\mathbf{0.898}$ & $0.879$ & $0.876$ & $0.895$ & $0.893$ & $\mathbf{0.898}$ \\
$1000$ & logistic & $N(0,9)$ & $2$ & $30\%$ & $\mathbf{0.878}$ & $0.856$ & $0.852$ & $0.876$ & $0.873$ & $\mathbf{0.878}$ \\
$200$ & logistic & min-EV & $2$ & $30\%$ & $0.875$ & $0.854$ & $0.851$ & $0.872$ & $0.871$ & $\mathbf{0.876}$ \\
$500$ & logistic & min-EV & $2$ & $30\%$ & $0.877$ & $0.855$ & $0.852$ & $0.875$ & $0.873$ & $\mathbf{0.878}$ \\
$1000$ & logistic & min-EV & $2$ & $30\%$ & $\mathbf{0.878}$ & $0.856$ & $0.852$ & $0.876$ & $0.873$ & $\mathbf{0.878}$ \\
$200$ & exp & $N(0,9)$ & $2$ & $15\%$ & $\mathbf{0.858}$ & $0.832$ & $0.830$ & $0.851$ & $0.851$ & $0.854$ \\
$200$ & exp & $N(0,9)$ & $2$ & $30\%$ & $\mathbf{0.872}$ & $0.848$ & $0.845$ & $0.865$ & $0.865$ & $0.866$ \\
$200$ & exp & $N(0,9)$ & $2$ & $50\%$ & $\mathbf{0.896}$ & $0.877$ & $0.873$ & $0.889$ & $0.890$ & $0.889$ \\
$500$ & exp & $N(0,9)$ & $2$ & $15\%$ & $\mathbf{0.859}$ & $0.835$ & $0.832$ & $0.856$ & $0.853$ & $0.856$ \\
$500$ & exp & $N(0,9)$ & $2$ & $30\%$ & $\mathbf{0.874}$ & $0.850$ & $0.847$ & $0.870$ & $0.867$ & $0.869$ \\
$500$ & exp & $N(0,9)$ & $2$ & $50\%$ & $\mathbf{0.898}$ & $0.879$ & $0.875$ & $0.895$ & $0.892$ & $0.891$ \\
$1000$ & exp & $N(0,9)$ & $2$ & $30\%$ & $\mathbf{0.874}$ & $0.850$ & $0.847$ & $0.872$ & $0.868$ & $0.869$ \\
$200$ & exp & min-EV & $2$ & $30\%$ & $\mathbf{0.873}$ & $0.850$ & $0.847$ & $0.870$ & $0.868$ & $0.870$ \\
$500$ & exp & min-EV & $2$ & $30\%$ & $\mathbf{0.875}$ & $0.852$ & $0.848$ & $0.873$ & $0.870$ & $0.872$ \\
$1000$ & exp & min-EV & $2$ & $30\%$ & $\mathbf{0.876}$ & $0.852$ & $0.848$ & $0.874$ & $0.871$ & $0.872$ \\
$200$ & logistic & min-EV & $1$ & $30\%$ & $0.779$ & $0.738$ & $0.738$ & $0.772$ & $0.773$ & $\mathbf{0.780}$ \\
$500$ & logistic & min-EV & $1$ & $30\%$ & $0.782$ & $0.742$ & $0.741$ & $0.778$ & $0.776$ & $\mathbf{0.783}$ \\
$1000$ & logistic & min-EV & $1$ & $30\%$ & $\mathbf{0.784}$ & $0.743$ & $0.742$ & $0.780$ & $0.777$ & $\mathbf{0.784}$ \\
\hline
\end{tabular}}
\end{table}

\end{document}